\newcommand{\be}{\begin{equation}}
\newcommand{\ee}{\end{equation}}
\newcommand{\floor}[1]{\lfloor #1 \rfloor}
\newcommand{\e}{\epsilon}
\newtheorem{lem}{Lemma}[section]
\newtheorem{thm}{Theorem}
\newtheorem{prop}[lem]{Proposition}
\newtheorem{rmk}[lem]{Remark}
\newtheorem{ex}[lem]{Example}
\numberwithin{equation}{section}
\title{Invasion fronts on graphs: the Fisher-KPP equation on homogeneous trees and Erd\H{o}s-R\'eyni graphs  }
\author[1]{Aaron Hoffman}
\affil[1]{\small Franklin W. Olin College of Engineering,
 Needham, MA 02492, USA}
\author[2]{Matt Holzer\footnote{mholzer@gmu.edu, 703-993-1460}}
\affil[2]{\small George Mason University, Department of Mathematical Sciences, Fairfax, VA 22030, USA}
\begin{document}
\maketitle

\begin{abstract}
We study the dynamics of the Fisher-KPP equation on the infinite homogeneous tree and Erd\H{o}s-R\'eyni random graphs.  We assume initial data that is zero everywhere except at a single node.  For the case of the homogeneous tree, the solution will either form a traveling front or converge pointwise to zero.  This dichotomy is determined by the linear spreading speed and we compute critical values of the diffusion parameter for which the spreading speed is zero and maximal and prove that the system is linearly determined.  We also study the growth of the total population in the network and identify the exponential growth rate as a function of the diffusion coefficient, $\alpha$.  Finally, we make predictions for the Fisher-KPP equation on Erd\H{o}s-R\'enyi random graphs based upon the results on the homogeneous tree. When $\alpha$ is small we observe via numerical simulations that mean arrival times are linearly related to distance from the initial node and the speed of invasion is well approximated by the linear spreading speed on the tree. Furthermore, we observe that  exponential growth rates of the total population on the random network can be bounded by growth rates on the homogeneous tree and provide an explanation for the sub-linear exponential growth rates 
that occur for small diffusion.

\end{abstract}

{\noindent \bf Keywords:} invasion fronts, linear spreading speed, homogeneous tree, random graph \\

\section{Introduction}
We consider the Fisher-KPP equation defined on a graph and study the transition of the system from unstable to stable homogeneous states.  To provide some motivation for this study we consider the dynamics of an invasive species on a transportation network.  One can think of each node in the network as describing a physical location and the edges connecting these nodes as available transportation routes between these locales.  When the invasive species is introduced at one location in the network we expect the species to grow and spread through the network until it resides at every node in the network that is favorable to the species.  The goal would be to determine in what manner this transition occurs and to estimate the time that is required for the species to arrive at any given node in the network.  

A realistic description of this process would involve complicated network topologies and a great deal of heterogeneity -- both in the local dynamics of the species at each node as well as the transportation mechanisms available to move between modes.  Our focus in this paper is to study a simplified version of this problem where the local dynamics are identical and prescribed by logistic growth while the movement between nodes is encapsulated by diffusion.  In regards to network topologies, we concentrate mainly on the case where the network is a homogeneous tree.  This restriction allows us to make precise statements regarding the dynamics of the system, which then serve as a baseline for comparison with numerical studies of more complicated and realistic situations.

To be precise, let $G=(V,E)$ be a undirected and unweighted countable graph; we will mainly be interested in the infinite homogeneous tree, but leave the discussion general for the time being leave.  Consider a differential equation defined on the set of vertices with local dynamics prescribed by the scalar differential equation $u_t=f(u)$.  The reaction term $f(u)\in C^1([0,1])$ is assumed to be of KPP type, see \cite{fisher37,kolmogorov37}, and satisfies the conditions,
\[ \begin{array}{lc} f(0)=f(1)=0  & \\
f'(0)>0, f'(1)<0 & \\
  f(u)>0 & \text{for} \  0<u<1 \\
f(u)<f'(0)u & \text{for}\  0<u<1.
\end{array}\]
We can assume further that $f'(0)=1$ after rescaling the independent variable. In most of the numerical simulations contained in this paper we use the explicit example of logistic growth, $f(u)=u(1-u)$.  Diffusion between nodes is incorporated by the graph Laplacian, $\Delta_G=A(G)-D(G)$ where $A(G)$ is the adjacency matrix and $D(G)$ is diagonal with $i$th entry equal to the degree of the $i$th node.  Incorporating both the local dynamics and diffusion between nodes we arrive at the reaction-diffusion equation that will be the main focus of this article, 
\be u_t=\alpha\Delta_G u+f(u). \label{eq:KPPonG} \ee
The dependent variable $u_i\in\mathbb{R}$ can be thought of as representing the population of a species residing at node $i\in V$.   

Consider initial conditions where the population of the species is zero everywhere in the graph aside from one node where some positive concentration exists.  Due to diffusion, the population will spread out into the graph.

%We are interested in both the {\bf pointwise} and {\bf aggregate} behavior of the population.   {\textcolor{red}{By pointwise behavior we refer to the long-time behavior of the solution at any given node.  We are particularly interested in the following two possibilities: {\bf saturation} Either at every node the solution will increase monotonically to the stable state at one, or at every node the solution will decrease to zero after an initial transient increase }}

We are interested in both the {\bf pointwise} and {\bf aggregate} behavior of the population.  For the infinite tree, we will be particularly interested in the existence/ non-existence of traveling fronts that propagate asymptotically with fixed speed and replace the unstable state at zero with the stable steady state at one.  When these fronts exist, their speeds provide a measure of how long it takes for the invasive species to overtake a node as a function of the distance of that node from the node of introduction.  For sufficiently large values of the diffusion parameter $\alpha$, we find that compactly supported initial data converges pointwise to zero.  It is important to note that this is not representative of the extinction of the species.  In studying the aggregate behavior we focus on the dynamics of the total population; that is the sum of the population over all the nodes in the graph.  We find that this growth rate is always positive.  Of particular interest is how this growth rate depends on system parameters and its relation to the existence/ non-existence of traveling fronts.

The situation is somewhat different for finite connected Erd\H{o}s-R\'eyni graphs  since the steady state at one is a global attractor of the dynamics.  Here we study arrival times describing how long it takes the species to reach any given node.  We are interested in whether the arrival times are approximately linear with respect to the distances from the node of origination.  When $\alpha$ is small we observe this to be the case numerically and we draw analogies between this behavior and the propagation of traveling fronts.

%By pointwise behavior, we refer to the question of whether the solution at any given node will converge to the stable steady state at one or remain near the unstable steady state at zero\footnote{A distinction must be made between finite and infinite graphs here.  In the case of a large but finite graph the solution will {\em eventually} converge to the stable state at one, but the solution may remain uniformly near zero for a long initial transient.  For an infinite tree, we refer to pointwise asymptotic convergence of the solution to zero}.  
%
% We will be particularly interested in the existence/ non-existence of traveling fronts that propagate asymptotically with fixed speed and replace the unstable state at zero with the stable steady state at one.  When these fronts exist, their speeds provide a measure of how long it takes for the invasive species to overtake a node as a function of the distance of that node from the node of introduction.  For the aggregate dynamics, we are interested in the growth rate of the total population in the network -- that is the sum of the population over all the nodes in the graph.  Of particular interest is how this growth rate depends on system parameters and its relation to the existence/ non-existence of traveling fronts.

Let us now focus on the case where $G$ is an infinite homogeneous tree where each node has degree $k+1$.  Identify one node as the root, label it $u_1$, and consider initial conditions wherein $u_1(0)=1$ with all other nodes initially equal to zero.  For this initial data, (\ref{eq:KPPonG}) can be reduced to the lattice dynamical system,
\begin{eqnarray}
  \frac{du_n}{dt}&=& \alpha\left( u_{n-1}-(k+1)u_n+ku_{n+1}\right)+f(u_n), \quad n\geq 2,\nonumber \\
  \frac{du_1}{dt}&=& \alpha(k+1)\left( -u_1+u_2\right)+f(u_1). \label{eq:tree} 
\end{eqnarray}
where $u_n(t)$ is a representative node from the set of nodes at distance $n-1$ from the root.

The evolution equation for any non-root node of the tree can be expressed as
\be \frac{du_n}{dt}= \alpha\left( u_{n-1}-2u_n+u_{n+1}\right)+\alpha(k-1)\left( u_{n+1}-u_n\right)+f(u_n).\label{eq:treefactored} \ee
In this way, the linear terms in (\ref{eq:tree}) can be viewed as a competition between diffusive and advective effects.  Take $\alpha=\frac{1}{(\Delta x)^2}$ for some small $\Delta x$.  Then  the first term on the right hand side of (\ref{eq:treefactored}) can be viewed as a discretization of the second derivative while the second term can be viewed as a discretization of the first derivative.  Therefore, in the continuum limit the dynamics are formally approximated by the solution of PDE,
\[ u_t=u_{xx}+\sqrt{\alpha}(k-1)u_x+f(u),\]
on a semi-infinite domain with no-flux boundary conditions at the left boundary.  In this scenario, the advection dominates the diffusion and localized initial data propagates to the left, i.e. up the tree, and eventually converges to zero.  On the other hand, when $\alpha$ is small and the system is near the anti-continuum limit, the reaction terms dominate the diffusive ones and localized initial data will spread down the tree.  The goal of this article is to understand the transition from spreading to pointwise convergence to zero  and, in the case of spreading,  to predict the spreading speed of the solution to (\ref{eq:tree}).

There has been a tremendous amount of effort dedicated to the study of the Fisher-KPP equation in a variety of contexts.  The system was initially studied as a PDE on the real line by Fisher \cite{fisher37} and Kolmogorov, Petrovskii and Piscunov \cite{kolmogorov37}.  In the context of lattice dynamical systems, the existence of traveling waves was established in \cite{zinner93}.  Most related to the current study is recent work in \cite{matano15} where spreading speeds for the Fisher-KPP equation on the hyperbolic space $\mathbb{H}^n$ are investigated.  Homogeneous trees are often used as a model for $\mathbb{H}^n$ and there are strong connections between the results obtained here and those in \cite{matano15}.  Indeed, in \cite{matano15} it is shown that there exists a critical diffusion coefficient above which the solution converges uniformly to zero, while for values below that threshold compactly supported initial data form a traveling front.  

The previous two decades has also witnessed an explosion of interest into the dynamics of differential equations on networks; see \cite{newman03,porter16,strogatz01,vespignani12} and the references therein for an overview of the field. Many studies of dynamical systems on networks treat each node as an individual with discrete state.  The present work deals with meta-population models, where the dependent variable at each node describes the concentration of individuals; see for example \cite{brockmann13,colizza07,hindes13} for studies of disease epidemics within the meta-population paradigm.  The work undertaken here is related to \cite{burioni12}.  There the Fisher-KPP equation on random graphs is studied with a focus on the growth rate of the overall population in the network.  Exponential growth rates of the population are studied and sub-linear growth rates are observed for small values of the diffusion parameter.  Finally, we also mention \cite{kouvaris12} where a bistable reaction-diffusion equation is considered and the existence of traveling and pinned, or stationary, fronts are exhibited on random networks and trees.

We now outline our main results, beginning with those that pertain to the case of  the infinite homogeneous tree. 

\paragraph{Linearly selected spreading speeds -- $l^1$ and $l^2$ critical diffusion coefficients}
A key feature of the dynamics of (\ref{eq:tree}) is non-monotonicity of the spreading speed.  We identify  two critical diffusion coefficients: $\alpha_1(k)$ and $\alpha_2(k)$.  The spreading speed is maximal for $\alpha_1(k)$ while for $\alpha_2(k)$ the spreading speed is zero and marks the boundary between spreading and non-spreading.  For any $0<\alpha<\alpha_2(k)$, the initial value problem (\ref{eq:tree}) forms a traveling front spreading with the linear spreading speed.  For $\alpha>\alpha_2(k)$ the solution converges pointwise exponentially fast to zero.  The decay rate of the selected front is $l^1$ critical at $\alpha_1$ and $l^2$ critical for $\alpha_2$.  

We also consider the generalization of (\ref{eq:tree}) to periodic trees, where the number of children at each level is periodic.  In this case a similar reduction to a lattice dynamical system is possible.  In analogy with the homogeneous case, we observe that the linear spreading speed for this system is zero when the selected decay rate is $l^2$ critical and the spreading speed has a critical point when  the selected decay rate is $l^1$ critical.  

It is interesting to contrast these dynamics with the case of the classical Fisher-KPP equation on the real line.  In that case, as the diffusion parameter $\alpha$ is increased the solution propagates with an increased speed.  This makes intuitive sense, as increasing the diffusion constant corresponds to increasing the mobility of the species being modeled.  On the infinite tree, it would appear that for $\alpha>\alpha_1(k)$, increasing the mobility of the species decreases the speed at which it invades.  However, we emphasize that while the solution to (\ref{eq:tree}) slows down with increasing $\alpha>\alpha_1$, the invasion of the species does not.  

When $\alpha$ is small the reactive effects dominate the diffusive ones and we obtain an asymptotic expansion of the spreading speed as $\alpha(k)\to 0$.  This expansion is independent of $k$ to leading order.  We therefore predict that dynamics of (\ref{eq:tree}) on more general networks should also be independent of the network topology for $\alpha$ sufficiently small.  We test this prediction on Erd\H{o}s-R\'enyi random graphs.

\paragraph{Growth rate of the total population}
We also study the total population on the graph -- both in aggregate and at each level of the tree.  Our main contribution here is to identify the maximal growth rate of the total population and to locate which levels of the tree this maximal population growth is concentrated on as a function of time.  

The exponential growth rate of the population in the linear system is one.  In the nonlinear system, growth is saturated at $u=1$ and we expect that the growth rate of the population may be slower in this case.  
In turns out that the critical diffusion parameter $\alpha_1(k)$
demarcates the boundary between these two possibilities.  
For $\alpha<\alpha_1(k)$, the population growth occurs primarily at the front interface and the maximal growth rate depends on the speed and decay rate of the front.  When $\alpha$ is larger than $\alpha_1(k)$, the maximal growth rate is one and occurs in a region ahead of the front interface where the components $u_n(t)$ are near zero.    The speed at which the maximal growth rate occurs can be determined in terms of the group velocity of the mode $k^n$.

\paragraph{Front propagation and population growth rates on Erd\H{o}s-R\'enyi random graphs: numerical simulations}
Finally, we address the potential relevance of our results to large, but finite, random graphs of Erd\H{o}s-R\'enyi type, \cite{erdos59}.  When the number of nodes in the network is large and the average degree of each node is small it may be reasonable to approximate the dynamics on the random graph by that of a tree with $k$ chosen appropriately.  We perform numerical simulations that suggest that for small values of $\alpha$, the dynamics of (\ref{eq:KPPonG}) are well approximated by the finite homogeneous tree.  We also consider the growth rate of the total population on the graph and demonstrate that this rate can be estimated from the growth rate on the homogeneous tree.  In analogy to the tree case, for small values of $\alpha$ the growth rate is observed to be less than the linear growth rate.

The article is organized as follows.  In Section~\ref{sec:tree}, we study the pointwise behavior of the solution on the homogeneous tree and prove Theorem~\ref{thm:spread}.  In Section~\ref{sec:pop}, we study the behavior of the total population in the tree.  In Section~\ref{sec:ER}, we apply some of the insights gained from the homogeneous tree to  Erd\H{o}s-R\'enyi random graphs.  We conclude in Section~\ref{sec:conc} with a short discussion.

\section{Front propagation  -- the linear spreading speed and proof of Theorem~\ref{thm:spread}}\label{sec:tree}
In this section, we study the dynamics of (\ref{eq:tree}) and establish the spreading speed  (or lack thereof) of solutions propagating down the tree. We first review the notion of linear spreading speeds.  We then establish some properties of the linear spreading speed for (\ref{eq:tree}) and identify the critical coefficients $\alpha_1(k)$ and $\alpha_2(k)$.  Noteworthy among these properties is the independence of the spreading speed on the constant $k$ in the asymptotic limit of $\alpha$ small.  Finally, we establish that the linear spreading speed is the speed selected in the nonlinear system and we extend the linear spreading speed to the case of periodic trees.

The Fisher-KPP equation, whether it be posed on the real line, lattice or other more exotic domain is an example of a {\em linearly determinate} system wherein the spreading speed for the full nonlinear system is equal to the spreading speed of the system linearized about the unstable state; see for example \cite{weinberger82}.  In this regard, a natural starting point for our analysis is system (\ref{eq:tree}) linearized about the unstable zero state.  Ignoring the equation for the root, we obtain the linearized problem,
\be  \frac{du_n}{dt}= \alpha\left( u_{n-1}-(k+1)u_n+ku_{n+1}\right)+u_n, \ n\geq 2.\label{eq:treelin} \ee
We seek separable solutions of the form $e^{\lambda t-\gamma n}$, $\gamma>0$, and obtain a solvability condition that is known as the dispersion relation, 
\be d(\lambda,\gamma)= \alpha\left(e^{\gamma}-k-1+ke^{-\gamma}\right)+1-\lambda.\label{eq:ulin} \ee
Simple roots of the dispersion relation relate exponentially decaying modes $e^{-\gamma n}$ to their temporal growth rate $e^{\lambda t}$.  Double roots, or more precisely pinched double roots, of (\ref{eq:ulin}) identify pointwise growth rates of compactly supported initial data; see for example \cite{briggs,bers84,brevdo96,sandstede00,vansaarloos03,holzer14}.  Double roots differentiate between absolute and convective instabilities.  In the context of (\ref{eq:treelin}), an absolute instablity refers to exponential growth of both the solution in norm ($l^2$ for example) as well as pointwise at each fixed node.  On the other hand, a convective instablity grows in norm, but is transported away from its original location so that the solution decays pointwise at each node.   A convective instability is one where the dispersion relation has simple roots with positive temporal growth rates, but all pinched double roots have negative growth rates, i.e. $\mathrm{Re}(\lambda)<0$.

The linear spreading speed is the speed at which compactly supported initial data spreads in the asymptotic limit as $t\to\infty$.  
This speed can be thought of as the speed for which the system transitions from absolute to convective instability and is defined by the presense of a pinched double root with purely imaginary, or in our case zero,  temporal growth rate.  We therefore consider the dispersion relation (\ref{eq:ulin}) in a co-moving frame. Let $d_{s}(\lambda,\gamma)=d(\lambda-s\gamma,\gamma)$ denote the dispersion relation in this frame.  Then the linear spreading speed for (\ref{eq:ulin}) is the speed $s_{lin}$ for which a $\gamma_{lin}$ simultaneously satisfies the equations,
\[ d_{s_{lin}}(0,\gamma_{lin})=0, \quad \partial_\gamma d_{s_{lin}}(0,\gamma_{lin})=0,\]
together with a pinching condition.  Pinching here refers to the requirement that the roots participating in the double root can be traced to opposite sides of the imaginary axis as $\mathrm{Re}(\lambda)\to \infty$.  

In order to introduce some further terminology we also mention an alternate, and equivalent, formulation of the linear spreading speed  as a minimizer of the {\em envelope velocity},
\[ s_{env}(\gamma)=\alpha\left( \frac{e^{\gamma}-k-1+ke^{-\gamma}}\gamma\right)+\frac{1}{\gamma}.\]
The envelope velocity prescribes the speed at which an exponential solution $e^{-\gamma n}$ propagates in the linear equation (\ref{eq:treelin}).  For systems that obey the comparison principle, any positive solution of the linearized system places an upper bound on the speed of propagation in the nonlinear system and so minimizing over all such speeds provides an upper bound.  In this way, the linear spreading speed can be alternatively defined as 
\[ s_{lin}=\min_{\gamma\in\mathbb{R}^+} s_{env}(\gamma).\]
The equivalence of these two formulations can be observed by imposing that 
\[0= \partial_\gamma s_{env}(\gamma)= \frac{1}{\gamma}\left(\frac{d\lambda}{d\gamma}-s_{env}\right).\]
The quantity $\frac{d\lambda}{d\gamma}$ is known as the {\em group velocity} and so the linear spreading speed occurs when the group velocity equals the envelope velocity.  In a frame moving at the envelope velocity, this then implies zero group velocity and hence that $\gamma$ is a pinched double root with $s=s_{env}$.  The notion of group velocity will arise again in Section~\ref{sec:pop}.

Moving forward, we will use the formulation in terms of pinched double roots and define 
\be F(s,\gamma,\alpha)=\left(\begin{array}{cc} \alpha\left(e^{\gamma}-k-1+ke^{-\gamma}\right)-s\gamma+1 \\ \alpha\left(e^{\gamma}-ke^{-\gamma}\right)-s \end{array}\right).\label{eq:F} \ee

\subsection{Properties of the linear spreading speed for (\ref{eq:treelin})}

Inspection of (\ref{eq:F}) reveals that explicit expressions for the linear spreading speed  are not available.  We compute solutions numerically and plot the linear spreading speed in Figure~\ref{fig:linSS}.  Two features of the linear spreading speed are readily apparent: it is non-monotone as a function of $\alpha$ and becomes negative above some critical value.   In what follows, we identify the linear spreading speed at two critical values of the diffusion parameter: the diffusion coefficient leading to the fastest spreading speed we denote $\alpha_1(k)$ whereas the diffusion coefficient with zero spreading speed we denote $\alpha_2(k)$.  The notation is chosen since we will show that  for $\alpha_1$ and $\alpha_2$  the decay rates of the fronts are critical in $l^1$ and $l^2$ respectively. Recall the definition of the $l^p$ norm on $G$,
\[ ||u||_p=\left( \sum_{v\in V}  |u_v|^p \right)^{\frac{1}{p}}.\]
Under our assumption that the values of $u$ at all nodes on each level of the tree are equivalent, the $l^p$ norm is expressed as
\be ||u||_p=\left( \sum_{n=1}^\infty k^{n-1}|u_n|^p\right)^{\frac{1}{p}}. \label{eq:lp} \ee

\begin{figure}[ht]
\centering
   \includegraphics[width=0.6\textwidth]{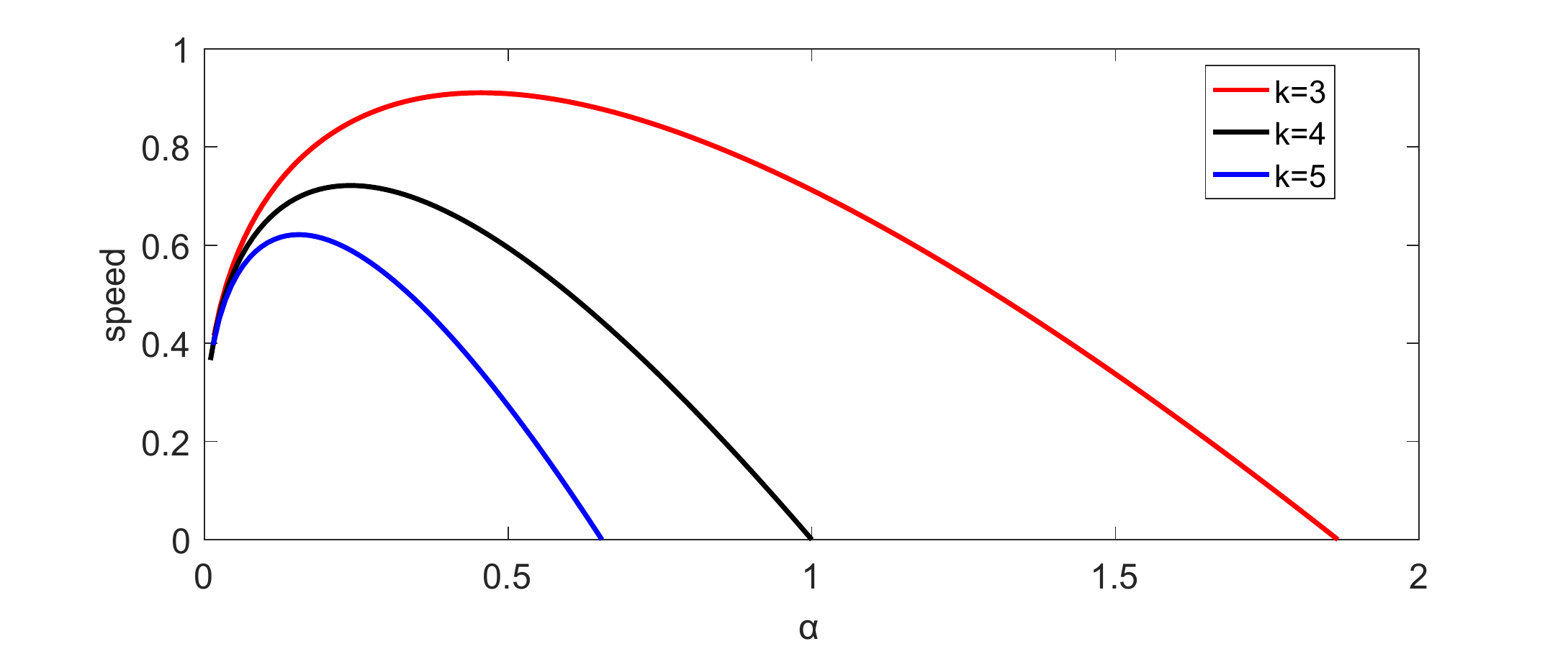}
\caption{ The linear spreading speed for (\ref{eq:ulin}), calculated numerically as a function of $\alpha$ for $k=3$ (red), $k=4$ (black) and $k=5$ (blue).  Note the critical values $\alpha_2(k)$ for which the spreading speed is zero and $\alpha_1(k)$ where the speed is maximal.  Also note that as $\alpha\to 0$, these spreading speeds appear to approach a common curve. 
}
\label{fig:linSS}
\end{figure}

\paragraph{The critical coefficient $\alpha_2(k)$}
We define $\alpha_2$ as the critical diffusion parameter for which there exists a pinched double root on the imaginary axis for $s=0$.  To compute $\alpha_2(k)$, let $s=0$ and consider the second component of $F$ set equal to zero, 
\[ \alpha\left(e^{\gamma}-ke^{-\gamma}\right)=0,\]
from which we compute the critical decay rate
\[ \gamma_2=\frac{1}{2}\log(k).\]
Plugging this into the dispersion relation, again with $s=0$,  we find 
\be \alpha_2=\frac{1}{k+1-2\sqrt{k}}.\ee
Note that $\gamma_2$ is the $l^2$-critical exponential decay rate, in the sense that any steeper exponential is $l^2$ on the tree while for any weaker exponential is not, see (\ref{eq:lp}). 
\paragraph{The critical coefficient $\alpha_1(k)$}
We again seek pinched double roots. Observe that for  $\gamma_1=\log(k)$,
\[ F(s,\log(k),\alpha)=\left(\begin{array}{cc} -s\log(k)+1 \\ \alpha\left(k-1\right)-s \end{array}\right),\]
then we obtain a solution $F(s_1,\log(k),\alpha_1)=0$, for
\[ s_1=\frac{1}{\log(k)}, \quad \alpha_1=\frac{1}{(k-1)\log(k)}.\]
Compute the Jacobians
\[ D_{s,\gamma}F(s_1,\log(k),\alpha_1)=\left(\begin{array}{cc} -\log(k) & 0 \\ -1 & \alpha_1(k+1)\end{array}\right), \quad D_{\alpha}F(s_1,\log(k),\alpha_1)=\left(\begin{array}{c} 0 \\ k-1 \end{array}\right),\]
from which the Implicit Function Theorem implies that double roots can be continued as functions $s(\alpha)$ and $\gamma(\alpha)$ with
\[ s'\left(\alpha_1\right)=0.\]
Note that $\gamma_1$ is the $l^1$-critical decay rate, recall (\ref{eq:lp}).

\begin{lem} The linear spreading speed is monotone increasing for $\alpha<\alpha_1(k)$ and is monotone decreasing for $\alpha_1(k)<\alpha<\alpha_2(k)$.  
\end{lem}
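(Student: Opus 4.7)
The plan is to compute $s'(\alpha)$ explicitly by implicit differentiation of the defining system $F(s(\alpha),\gamma(\alpha),\alpha) = 0$, and then reduce the sign analysis to the position of $\gamma(\alpha)$ relative to $\gamma_1 = \log k$.

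First, differentiating both components of $F = 0$ with respect to $\alpha$ and using the second equation in the form $\alpha(e^{\gamma}-ke^{-\gamma}) = s$, the $\gamma'(\alpha)$ contributions in the first equation cancel, leaving the clean formula
\[
s'(\alpha) \;=\; \frac{e^{\gamma}-(k+1)+ke^{-\gamma}}{\gamma} \;=\; \frac{(e^{\gamma}-1)(1-ke^{-\gamma})}{\gamma},
\]
evaluated at $\gamma = \gamma(\alpha)$. Since $\gamma > 0$, the sign of $s'(\alpha)$ is the sign of $(e^{\gamma}-1)(1-ke^{-\gamma})$, which is negative for $0 < \gamma < \log k$, zero at $\gamma = \log k = \gamma_1$, and positive for $\gamma > \log k$. (This is consistent with the fact, already established via IFT, that $s'(\alpha_1) = 0$.)

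Thus monotonicity of $s(\alpha)$ will follow once I show that $\gamma(\alpha)$ is strictly decreasing in $\alpha$ on $(0,\alpha_2)$, with $\gamma(\alpha_1) = \gamma_1$ and $\gamma(\alpha_2) = \gamma_2 = \tfrac{1}{2}\log k$ (the latter two being known from the computations of $\alpha_1$ and $\alpha_2$). To do this, I would eliminate $s$ from $F = 0$ to express $\alpha$ explicitly as a function of $\gamma$:
\[
\alpha(\gamma) \;=\; \frac{1}{h(\gamma)}, \qquad h(\gamma) \;=\; (\gamma-1)e^{\gamma} + (k+1) - k(\gamma+1)e^{-\gamma}.
\]
A direct computation gives $h(0) = 0$ and, after cancellation,
\[
h'(\gamma) \;=\; \gamma\bigl(e^{\gamma} + k e^{-\gamma}\bigr),
\]
which is strictly positive for $\gamma > 0$. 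Hence $h$ is strictly increasing and positive on $(0,\infty)$, so $\alpha(\gamma)$ is strictly decreasing there and smoothly parametrizes the pinched double root branch. Inverting, $\gamma(\alpha)$ is strictly decreasing in $\alpha$, and in particular $\gamma(\alpha) > \gamma_1$ for $\alpha < \alpha_1$ while $\gamma_2 < \gamma(\alpha) < \gamma_1$ for $\alpha_1 < \alpha < \alpha_2$.

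Combining the two steps yields the conclusion: $s'(\alpha) > 0$ on $(0,\alpha_1)$ and $s'(\alpha) < 0$ on $(\alpha_1,\alpha_2)$. The potential obstacle I anticipated was proving global monotonicity of $\gamma(\alpha)$, since local information from the implicit function theorem near $\alpha_1$ would only give monotonicity in a neighborhood; what saves the argument is the surprisingly clean identity $h'(\gamma) = \gamma(e^{\gamma}+ke^{-\gamma})$, which gives a single explicit inequality valid on the entire range of interest.
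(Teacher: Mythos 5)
Your proof is correct and follows essentially the same route as the paper: eliminate $s$ from $F=0$ to get $\alpha G(\gamma)=1$ with $G'(\gamma)=\gamma\left(e^{\gamma}+ke^{-\gamma}\right)>0$ (so $\gamma(\alpha)$ is globally decreasing), then implicitly differentiate to find that $s'(\alpha)$ has the sign of $e^{\gamma}-(k+1)+ke^{-\gamma}$. Your factorization $(e^{\gamma}-1)(1-ke^{-\gamma})$ is a slightly tidier way to read off that sign than the paper's argument that this quantity decreases in $\alpha$ and vanishes at $\alpha_1$; note also that the paper's printed formula for $ds_{lin}/d\alpha$ carries an extra factor of $1/\alpha$ compared with yours, which is immaterial for the sign.
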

\begin{proof} Solving the second equation in $F=0$ for $s$ and plugging into the first equation we obtain an implicit equation for $\gamma$ and $\alpha$ given by 
\[ \alpha \left( (\gamma-1)e^{\gamma}+(k+1)-k(\gamma+1)e^{-\gamma}\right)=1.\]
Fix $k$ and define
\[ G(\gamma)=(\gamma-1)e^{\gamma}+(k+1)-k(\gamma+1)e^{-\gamma}.\]
Note that $G(0)=0$, $\lim_{\gamma\to\infty} G(\gamma)=\infty$ and that $G'(\gamma)>0$.  Therefore $G$ is monotone and  invertible on its range.  Consequently,  for every $\gamma>0$ there exists a $\alpha$ for which  $F\left(\alpha(e^{\gamma}-ke^{-\gamma}),\gamma,\alpha\right)=0$.  We observe that the selected decay rate $\gamma$ is a decreasing function of $\alpha$ and that the spreading speed $\alpha(e^\gamma-ke^{-\gamma})$ may be negative.  

Applying the Implicit Function Theorem we find an expression for 
\[ \frac{ds_{lin}}{d\alpha}=\frac{1}{\gamma\alpha}\left(e^{\gamma}-k-1+ke^{-\gamma}\right).\]
We observe that this zero exactly for $\alpha=\alpha_1$.  Differentiating the expression in the parenthesis we obtain
\[ e^\gamma-ke^{-\gamma}.\]
This derivative changes sign exactly at $\alpha=\alpha_2$ and thus $e^{\gamma}-k-1+ke^{-\gamma}$ is monotone decreasing for $\alpha<\alpha_2$ and the result follows.   
\end{proof}

\paragraph{Asymptotics for small $\alpha$}
Consider $0<\alpha\ll 1$.  We will derive a the leading order term in an asymptotic expansion for the linear spreading speed in this limit.  Consider the implicit equation
\[ \alpha G(\gamma)=1.\]
In the limit as $\alpha$ tends to zero, it must be the case that $G(\gamma)\to\infty$.  Since $\gamma>0$,  $G(\gamma)$ is dominated by the term $\gamma e^\gamma$ and we expect
\[ \gamma e^{\gamma}\approx \frac{1}{\alpha}.\]
Inverting, the selected decay rate is to leading order
\[ \gamma=W\left(\frac{1}{\alpha}\right),\]
where $W$ is the Lambert W-function.  Note that $s=\alpha e^\gamma$ to leading order and obtain
\[ s=\frac{1}{W\left(\frac{1}{\alpha}\right)},\]
to leading order. We remark that the linear spreading speed is independent of the degree $k$ to leading order when $\alpha\ll 1$.  This means that for $\alpha$ small the spreading speed of the system is close to the spreading speed of the Fisher-KPP equation on a lattice in the anti-continuum limit and the arrival time to any node in the tree is dominated by its distance from the root.  This has important implications for applications to more general networks, which we will return to later.

\paragraph{Linear Determinacy}

We have thus far focused entirely on the spreading properties of the linearized system.  We now turn our attention to the nonlinear system and introduce some standard terminology.  Define the invasion point for (\ref{eq:tree}) as
\[ \kappa(t)=\sup_{n\in\mathbb{N}}\left\{ n \ | \ u_n(t)>\frac{1}{2} \right\}.\]
The {\em selected spreading speed} is then defined as
\[ s_{sel}=\lim_{t\to\infty} \frac{\kappa(t)}{t}.\]

We prove the following theorem which establishes that the selected spreading speed is the linear spreading speed.
\begin{thm}\label{thm:spread} Consider (\ref{eq:tree}) with initial condition $u_1(0)=1$ and $u_j(0)=0$ for all $j\neq 1$.  Suppose that $0<\alpha<\alpha_2(k)$. Then the selected spreading speed equals the linear spreading speed, i.e.  $s_{sel}=s_{lin}$.  For $\alpha>\alpha_2(k)$, we have that $u_n(t)\to 0$ uniformly as $t\to\infty$.  
\end{thm}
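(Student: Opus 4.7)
My plan is to prove both parts of Theorem~\ref{thm:spread} by comparison with the linearization (\ref{eq:treelin}). Since $f(u)\leq u$ by the KPP hypothesis and the system (\ref{eq:tree}) is cooperative (positive off-diagonal entries $\alpha$, $\alpha k$, $\alpha(k+1)$), every nonnegative solution of the linear problem is a supersolution of the nonlinear one. The workhorse is the explicit family of exponential solutions of (\ref{eq:treelin}),
\[
U_n(t)=e^{-\gamma(n-1)+\lambda(\gamma)t},\qquad \lambda(\gamma)=\alpha\bigl(e^{\gamma}-(k+1)+ke^{-\gamma}\bigr)+1,\qquad \gamma>0.
\]
These solve the linearization for $n\geq 2$; a direct computation shows $U$ is also a supersolution at the root, since the root equation differs from the interior equation at $n=1$ by $\alpha(U_2-U_0)=-2\alpha\sinh(\gamma)U_1<0$ on any exponential with $\gamma>0$. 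Since $U_1(0)=1=u_1(0)$ and $U_n(0)>0=u_n(0)$ for $n\geq 2$, the initial data is dominated, and comparison yields $u_n(t)\leq U_n(t)$ for all $n$ and $t$.

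For the extinction case $\alpha>\alpha_2(k)$, I specialize to $\gamma=\gamma_2=\tfrac{1}{2}\log k$, so that $\lambda(\gamma_2)=1-\alpha(k+1-2\sqrt{k})<0$. Then
\[
0\leq u_n(t)\leq e^{-\gamma_2(n-1)+\lambda(\gamma_2)t}\leq e^{\lambda(\gamma_2)t}\to 0
\]
as $t\to\infty$, uniformly in $n$.

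For the upper bound $s_{sel}\leq s_{lin}$ in the spreading case $0<\alpha<\alpha_2$, the same family of supersolutions gives $u_n(t)\leq e^{-\gamma(n-1)+\lambda(\gamma)t}$ for every $\gamma>0$, so that $u_n(t)<\tfrac{1}{2}$ whenever $n-1>(\lambda(\gamma)t+\log 2)/\gamma$. Hence $\limsup_{t\to\infty}\kappa(t)/t\leq s_{env}(\gamma)$, and minimizing over $\gamma$ yields the bound $s_{lin}$.

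The matching lower bound $s_{sel}\geq s_{lin}$ is the main technical obstacle, since supersolutions alone cannot control the front from below. My plan is to invoke the general spreading-speed theorem for cooperative KPP semiflows of Weinberger~\cite{weinberger82}, adapted to the lattice setting along the lines of Zinner~\cite{zinner93}. The hypotheses reduce to: order preservation of the semiflow (cooperativity), existence of an unstable homogeneous equilibrium at $0$ and a stable one at $1$ (structure of $f$), and the KPP subtangential bound $f(u)\leq f'(0)u$ (assumed), all of which are immediate from our setup. The theorem then identifies $s_{sel}$ with the spreading speed of (\ref{eq:treelin}), which is exactly $s_{lin}$ as characterized by the pinched double roots of (\ref{eq:F}). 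A more self-contained proof would construct compactly supported subsolutions propagating at arbitrary speeds $s<s_{lin}$, using a truncated exponential ansatz in a moving frame and the KPP lower bound $f(u)\geq u-Cu^{1+\eta}$ for small $u$; the asymmetric diffusion $\alpha(u_{n-1}-(k+1)u_n+ku_{n+1})$ and the modified root equation make this construction more delicate than in the classical Fisher-KPP setting on $\mathbb{Z}$, which is why I favor the abstract route.
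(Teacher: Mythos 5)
Your treatment of the extinction regime and of the upper bound $s_{sel}\leq s_{lin}$ is sound and essentially the paper's argument: exponential solutions of the linearization are supersolutions, and evaluating at $\gamma_2=\tfrac12\log k$ gives uniform decay for $\alpha>\alpha_2$, while minimizing over $\gamma$ gives $\limsup \kappa(t)/t\leq s_{lin}$. (One small repair: your supersolution exceeds $1$, where $f$ is not defined and the paper's comparison principle, which requires values in $[0,1]$, does not apply; either cap the supersolution at $1$ as the paper does with $\min\{1,Ce^{-\gamma(n-s_{env}t)}\}$, or compare $u$ with the solution of the linear equation itself using $f(u)\leq u$ on $[0,1]$. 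Your root computation, showing the exponential is a supersolution at $n=1$ because the root coupling is dominated by the interior one, is correct and mirrors the paper's Lemma computations.)

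The genuine gap is the lower bound $s_{sel}\geq s_{lin}$, which is the heart of the theorem and which you delegate to the abstract spreading-speed theory of \cite{weinberger82} (via \cite{zinner93}). That theory requires a translation-invariant (homogeneous) habitat; here the reduced system (\ref{eq:tree}) lives on $\mathbb{N}$ with a distinguished root equation, so translation invariance fails, and this is precisely why the paper states that the result ``would be a direct consequence of \cite{weinberger82}'' only had the system been posed on $\mathbb{Z}$, and then deliberately does \emph{not} take that route. Nor can one cheaply reduce to $\mathbb{Z}$ by comparison: because the effective advection in the $u$-variables points up the tree, the solution of the whole-line problem is not obviously a subsolution of the half-line problem at the root, so dominating the half-line dynamics from below by a $\mathbb{Z}$-invariant problem needs an argument you have not supplied. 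The paper closes this gap with an explicit construction you mention but do not execute: compactly supported subsolutions $\epsilon\phi_\mu(n-st)$, where $\phi_\mu$ is the real part of an exponential with \emph{complex} decay rate $\gamma_\mu$ solving the dispersion relation at reduced growth rate $\mu<1$ and speed $s<s_\mu$, truncated to its first positivity interval and, after a time $T$, translated past the root so the boundary equation never enters; the strict positivity of $u_n(t_1)$ (maximum principle) then allows $\epsilon$ to be chosen so the subsolution sits below the solution, giving $\kappa(t)\geq st$ for every $s<s_{lin}$. Without either carrying out this construction (handling the asymmetric coefficients and the root, which you yourself flag as delicate) or verifying that a Weinberger-type theorem genuinely applies on the half-line with the root equation, the claim $s_{sel}\geq s_{lin}$ — and hence the theorem — is not established.
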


Theorem~\ref{thm:spread} states that the system is {\em linearly determinate} and that the spreading speed of the nonlinear system is the same as the system linearized about the unstable state.  We remark that had (\ref{eq:tree}) been posed on $\mathbb{Z}$, rather than $\mathbb{N}$,  then this result would be a direct consequence of \cite{weinberger82}.  Rather than adapt that proof to this context, we will construct explicit sub and super solutions that bound the spreading speed of system (\ref{eq:tree}).  Since the proof proceeds along familiar lines, we delay its presentation until Appendix~\ref{sec:proof}.

\subsection{The periodic tree }
We now consider the generalization to an infinite rooted tree where the number of nodes per level is periodic.  We again assume initial data consisting of one at the root and zero elsewhere and obtain a similar reduction to a lattice dynamical system with the evolution given by
\begin{eqnarray}
  \frac{du_n}{dt}&=& \alpha\left( u_{n-1}-(k_n+1)u_n+k_nu_{n+1}\right)+f(u_n), \quad n\geq 2,\nonumber \\
  \frac{du_1}{dt}&=& \alpha(k_1+1)\left( -u_1+u_2\right)+f(u_1). \label{eq:pertree} 
\end{eqnarray}
where $k_n$ is periodic in $n$ with periodic $m$, i.e. $k_{n+m}=k_n$.  Our main goal is to study the linear spreading speed as a function of $\alpha$ and to show that the $l^2$ critical front is stationary while the $l^1$ front is a critical point of the linear spreading speed.  

To this end, we linearize (\ref{eq:pertree}) about the unstable state at zero and obtain the following equation at any (non-root) node
\[ \frac{du_n}{dt}= \alpha\left( u_{n-1}-(k_n+1)u_n+k_nu_{n+1}\right)+u_n.  \]
We seek separable solutions of the form 
\be u_n(t)=e^{-\gamma (n-st)}U_n,\label{eq:sov} \ee
where $\gamma>0$ and $U_n$ is periodic with period $m$. Since $U_n$ is periodic we focus on one segment of length $m$ which we denote $U=(U_1,U_2,\dots,U_{m})^T$.  Plugging in the ansatz (\ref{eq:sov}), $U$ must satisfy
\[ \gamma s U = \alpha M U+U,\]
where for $m\geq 3$, 
\[ M_{ij}(\gamma) = \left\{ \begin{array}{ll} e^{\gamma} & j = i-1 \mod m \\ -(k_i + 1) & j = i \\ k_i e^{-\gamma} & j = i+1 \mod m \\ 0 & \mbox{ else} \end{array}. \right. \]
Note that $M$, after adding an appropriate multiple of the identity is positive, so it has a real principle eigenvalue $\lambda(\gamma,p)$ and corresponding  principle eigenvector with positive entries, which we denote $U$.  In this manner, we obtain a scalar dispersion relation that relates the decay rate $\gamma$ to the envelope speed $s$, 
\[ d(\gamma,s)=\alpha \lambda(\gamma)-s\gamma+1=0.\]
Pinched double roots are therefore solutions of the system of equations 
\be F(s,\gamma,\alpha)=\left(\begin{array}{cc} \alpha\lambda(\gamma)-s\gamma+1 \\ \alpha\partial_\gamma \lambda(\gamma)-s \end{array}\right).\label{eq:Ftree} \ee
We are interested in the linear spreading speeds corresponding to $l^p$ critical decay rates.  Critical fronts in $l^p$ have decay rate 
\[ \gamma_p=\frac{\sum_{i=1}^{m} \log(k_i)}{mp}.\]
Then $\alpha_p$ and $s_p$ can be defined through the relationship $F(s_p,\gamma_p,\alpha_p)=0$ and we may express $F$ entirely as a function of $p$.  Set the second component of $F$ equal to zero, from which 
\[ s_p=-\alpha_p m\frac{p^2  \lambda'(p)}{\sum_{i=1}^{m} \log(k_i)}.\]
Then for the first component of $F$ to be zero it is required that 
\[ \alpha_p=\frac{-1}{\lambda(p)+\lambda'(p)p}.\]
Combining these two expressions we obtain an equation for the linear spreading speed as a function of $p$, in terms of the principle eigenvalue of $M$,  
\[ s_p= \frac{m}{\sum_{i=1}^{m} \log(k_i)}\frac{p^2  \lambda'(p)}{\lambda(p)+\lambda'(p)p}.\]
We now express the matrix $M$  in terms of $p$. 
We first note that 
\[ M_{ij}(\gamma_p) = \left\{ \begin{array}{ll} K^{\frac{1}{p}} & j = i-1 \mod m \\ -(k_i + 1) & j = i \\ k_i K^{-\frac{1}{p}} & j = i+1 \mod m \\ 0 & \mbox{ else} \end{array}, \right. \]
where $K=\left(\Pi_{i=1}^{m} k_i\right)^{\frac{1}{m}}$. 
Before stating our results, we show that $M(\gamma_p)$ is similar to $M(\gamma_q)$, where $q$ is the H\"{o}lder dual of $p$.  To do this, we compute the determinant, 
\[ \mathrm{det}(M(\gamma))=(-1)^{m+1}\left( k_1\dots k_m e^{-m\gamma}+e^{m\gamma}\right)+\Delta(k_1,\dots,k_m), \]
where the terms $\Delta(k_1,\dots,k_m)$ do not depend on $\gamma$. 
With $\gamma=\gamma_p$, we find 
\[  \mathrm{det}(M(\gamma_p)) =(-1)^{m+1}\left( (k_1\dots k_m)^{1-\frac{1}{p}} + (k_1\dots k_m)^{\frac{1}{p}} \right)+\Delta(k_1,\dots,k_m).\]
Consequently, the characteristic polynomial is invariant when $p$ is replaced with its H\"{o}lder dual $q$ and therefore the matrices $M(\gamma_p)$ and $M(\gamma_q)$ are similar.

\begin{prop}
$s_2 = 0$.
\end{prop}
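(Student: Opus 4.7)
The plan is to leverage the Hölder-dual similarity $M(\gamma_p)\sim M(\gamma_q)$ (with $\tfrac{1}{p}+\tfrac{1}{q}=1$) established in the paragraph preceding the statement. Writing $\Lambda(p):=\lambda(\gamma_p)$, the closed form derived there reads
\[
s_p\;=\;\frac{m}{\sum_{i=1}^{m}\log k_i}\,\frac{p^{2}\,\Lambda'(p)}{\Lambda(p)+p\,\Lambda'(p)},
\]
so the proposition reduces to showing $\Lambda'(2)=0$, provided the denominator $\Lambda(2)+2\Lambda'(2)=\Lambda(2)$ is nonzero.

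The vanishing of $\Lambda'(2)$ follows from a chain-rule/fixed-point argument. Since $M(\gamma_p)$ and $M(\gamma_q)$ share a characteristic polynomial, they share all eigenvalues, in particular the principal one, yielding the functional identity $\Lambda(p)=\Lambda(q(p))$ with $q(p)=p/(p-1)$. The map $p\mapsto q(p)$ is an involution on $(1,\infty)$ whose unique fixed point is $p=2$, and at that point $q'(2)=-1/(p-1)^2|_{p=2}=-1$. Differentiating $\Lambda(p)=\Lambda(q(p))$ at $p=2$ gives
\[
\Lambda'(2)\;=\;\Lambda'(q(2))\,q'(2)\;=\;-\Lambda'(2),
\]
so $\Lambda'(2)=0$, and substituting into the displayed formula for $s_p$ yields $s_2=0$.

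The one place I anticipate mild friction is verifying that the denominator $\Lambda(2)$ does not vanish, i.e.\ that $\alpha_2=-1/\Lambda(2)$ is finite. I would handle this exactly as in the homogeneous case: adding $cI$ with $c>\max_i(k_i+1)$ turns $M(\gamma_2)$ into a non-negative irreducible matrix, so by Perron--Frobenius its principal eigenvalue is controlled above and below by its row sums. A short AM--GM calculation on the row entries $K^{1/2}$ and $k_iK^{-1/2}$ then pins down the sign $\Lambda(2)<0$, reducing in the homogeneous $(m=1)$ case to the explicit $\Lambda(2)=-(\sqrt{k}-1)^2$ matching the value $\alpha_2=1/(k+1-2\sqrt{k})$ computed earlier. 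The symmetry half of the argument is the conceptual core; the sign verification is routine but is the only step that requires inspecting the matrix directly rather than exploiting the Hölder involution.
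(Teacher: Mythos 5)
Your argument is essentially the paper's own proof: both exploit the similarity $M(\gamma_p)\sim M(\gamma_q)$ to get $\lambda(p)=\lambda(q(p))$, differentiate at the self-dual point $p=2$ where $q'(2)=-1$, and conclude $\lambda'(2)=0$, hence $s_2=0$ from the displayed formula for $s_p$. Your additional check that the denominator $\lambda(2)$ is nonzero (and negative, consistent with $\alpha_2>0$) is a reasonable extra precaution that the paper leaves implicit, but it does not change the route.
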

\begin{proof}
Let $q$ be the H\"{o}lder dual of $p$.  Making use of the fact that $M(p)$ and $M(q)$ are similar it follows that $\lambda(p) = \lambda(q)$ and hence that $\lambda'(p) = \lambda'(q) q'$.  Since $q' = -(q/p)^2$ this rewrites as 
\[ p^2\lambda'(p) + q^2\lambda'(q) = 0 \]
In the case $p = q = 2$ this reduces to $\lambda'(2) = 0$.  Thus $s_2 = 0$ as desired.
\end{proof}

\begin{prop}
$s_1 = \frac{m}{\sum \log k_i}$.  Further $s_1' = 0$.
\end{prop}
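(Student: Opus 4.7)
The plan is to reduce both claims to the single identity $\lambda(1)=0$, where $\lambda(p)$ denotes the Perron eigenvalue of $M(\gamma_p)$, and then to establish this identity by constructing an explicit positive null vector of $M(\gamma_1)$. Granting $\lambda(1)=0$, the first claim is immediate from the formula $s_p=\frac{m}{\sum \log k_i}\cdot\frac{p^2\lambda'(p)}{\lambda(p)+p\lambda'(p)}$ derived above, which collapses at $p=1$ to $\frac{m}{\sum \log k_i}\cdot\frac{\lambda'(1)}{\lambda'(1)}=\frac{m}{\sum \log k_i}$. For the derivative statement, write $s_p=A\,f(p)/g(p)$ with $A=m/\sum \log k_i$, $f(p)=p^2\lambda'(p)$, and $g(p)=\lambda(p)+p\lambda'(p)$; direct differentiation yields $f'(p)=2p\lambda'(p)+p^2\lambda''(p)$ and $g'(p)=2\lambda'(p)+p\lambda''(p)$, so $f'(1)=g'(1)=2\lambda'(1)+\lambda''(1)$. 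Together with $f(1)=g(1)=\lambda'(1)$ (using $\lambda(1)=0$), the numerator $f'g-fg'$ of the quotient rule vanishes at $p=1$, forcing $s_p'(1)=0$.

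The main content is therefore to prove $\lambda(1)=0$. Set $K:=(\prod_{i=1}^m k_i)^{1/m}$, so $\gamma_1=\log K$, and the eigenvalue problem $M(\gamma_1)U=0$ on $\mathbb{Z}/m\mathbb{Z}$ reads $K\,U_{i-1}-(k_i+1)U_i+\tfrac{k_i}{K}U_{i+1}=0$. Introducing the discrete flux $Z_i:=U_i-K\,U_{i-1}$, a short rearrangement collapses this second-order cyclic recurrence to the scalar relation $Z_{i+1}=(K/k_i)Z_i$, whose solution $Z_i=Z_1\,K^{i-1}/\prod_{j=1}^{i-1}k_j$ is automatically $m$-periodic because $\prod_{j=1}^m k_j=K^m$. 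Inverting the definition gives $U_i=K^iU_0+\sum_{j=1}^i K^{i-j}Z_j$, and the closure condition $U_m=U_0$ pins down $U_0=\sum_j K^{m-j}Z_j/(1-K^m)$ in the non-degenerate regime $K\neq 1$. A short sign analysis, choosing $\mathrm{sgn}(Z_1)=\mathrm{sgn}(1-K^m)$, shows that all entries $U_i$ can be made strictly positive. Since $M(\gamma_1)+cI$ is non-negative and irreducible for $c$ large (the cycle graph is strongly connected), Perron-Frobenius identifies this positive null vector with the Perron eigenvector of the shift, yielding $\lambda(1)=0$.

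I expect the positivity check to be the main technical obstacle: the factor $1-K^m$ changes sign between $K<1$ and $K>1$, so the two regimes must be treated separately while tracking the residual $\sigma_i+K^m(\sigma_m-\sigma_i)$, where $\sigma_i=\sum_{j\leq i}1/\prod_{l<j}k_l$. An appealing alternative that sidesteps this analysis is to pass to the aggregate variables $v_n=(\prod_{j<n}k_j)u_n$, in which the linearized dynamics take the form $v_n'=\alpha(k_{n-1}v_{n-1}-(k_n+1)v_n+v_{n+1})+v_n$; the diffusion terms telescope under summation, giving $\frac{d}{dt}\sum_n v_n=\sum_n v_n$, i.e., total growth rate exactly $1$. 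Matching this against the traveling-wave ansatz $v_n\sim e^{\gamma_1 s_1 t}\cdot(\text{periodic in }n)$, which is indeed bounded in $n$ precisely at the $l^1$-critical rate $\gamma_1=\log K$, directly forces $\gamma_1 s_1=1$, i.e., $s_1=m/\sum\log k_i$.
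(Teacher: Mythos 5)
Your proposal is correct, and its second half is essentially the paper's computation: you substitute $\lambda(1)=0$ into the displayed formula for $s_p$, and your quotient-rule bookkeeping ($f=p^2\lambda'$, $g=\lambda+p\lambda'$, $f'g-fg'=p\lambda(2\lambda'+p\lambda'')$) is exactly the paper's expression for $s_p'$, whose numerator is proportional to $\lambda(p)$ and hence vanishes at $p=1$ (both arguments tacitly assume $\lambda'(1)\neq 0$ so that the formulas are nondegenerate). Where you genuinely differ is in how the key fact $\lambda(1)=0$ is obtained. The paper argues via vanishing sums: at $p=\infty$ the row sums of $M$ are zero, so the all-ones vector is a positive (hence Perron) null vector; for $p=1$ it asserts the analogous column-sum statement, which is not literally true for general periodic degrees (the $j$th column sum of $M(\gamma_1)$ is $K+k_{j-1}/K-k_j-1$), and what actually closes the gap is the similarity $M(\gamma_1)\sim M(\gamma_\infty)$ established just before the propositions, giving $\lambda(1)=\lambda(\infty)=0$. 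You instead work directly at $p=1$: the reduction of the cyclic recurrence $KU_{i-1}-(k_i+1)U_i+(k_i/K)U_{i+1}=0$ to $Z_{i+1}=(K/k_i)Z_i$ with $Z_i=U_i-KU_{i-1}$ is correct, the solution is $m$-periodic because $\prod k_j=K^m$, and the positivity you flag as the main obstacle does go through with your sign convention: for $K>1$ take all $Z_j<0$, and then $(K^m-1)U_i=\sum_{j\le i}K^{i-j}|Z_j|+\sum_{j>i}K^{m+i-j}|Z_j|>0$; the case $K<1$ is immediate, and $K=1$ reduces to the zero-row-sum situation. Perron--Frobenius then identifies your positive null vector with the principal eigenvector, so $\lambda(1)=0$. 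Your route is longer but self-contained (it needs neither the H\"older-dual similarity nor the column-sum claim, and in fact repairs the one imprecise step in the paper's argument), while the paper's route buys brevity by leveraging the duality together with the trivial $p=\infty$ case. Your closing ``alternative'' via telescoping the aggregate variables and matching growth rates against a traveling-wave ansatz is only heuristic as stated, but it is not needed for the proof.
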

\begin{proof}
First, we compute
\[ s_p'=\frac{m}{\sum_{i=1}^{m-1} \log(k_i)} \frac{p\lambda(2\lambda'+p\lambda'')}{(p\lambda' + \lambda)^2} \]
In the case $p = \infty$ ($1$) the matrix $M$ has row (column) sum zero, hence the vector of all ones is in its (left) null space.  As this vector is positive, it is necessarily the principal (left) eigenvector, hence zero is the principal eigenvalue.  With $\lambda = 0$ the expression for $s_p$ reduces to $mp\frac{1}{\sum \log k_l}$.  Further $s_p'$ vanishes when $\lambda = 0$.

\end{proof}
We now present some results for periodic trees with period two, where explicit expressions can be obtained.
\begin{ex}\label{ex:per} Consider the periodic tree with $m=2$, where the degrees alternate between $k_1$ and $k_2$.  We compute
\[ \lambda(\gamma)=-\frac{k_1+k_2+2}{2}+\sqrt{e^{2\gamma}+\frac{1}{4}(k_1-k_2)^2+k_1+k_2+k_1k_2e^{-2\gamma}}\]
An explicit calculation reveals
\[ \gamma_2=\frac{1}{4}\log(k_1k_2),\quad \alpha_2=\frac{1}{\frac{k_1+k_2+2}{2}-\sqrt{\frac{1}{4}(k_1-k_2)^2+k_1+k_2+2\sqrt{k_1k_2}}}, \]
and
\[ \gamma_1=\frac{1}{2}\log(k_1k_2),\quad \alpha_1=\frac{k_1+k_2+2}{(k_1k_2-1)\log{k_1k_2}} \]
We plot $\alpha_1$ and $\alpha_2$ in Figure~\ref{fig:per}.
\begin{figure}
   \includegraphics[width=0.45\textwidth]{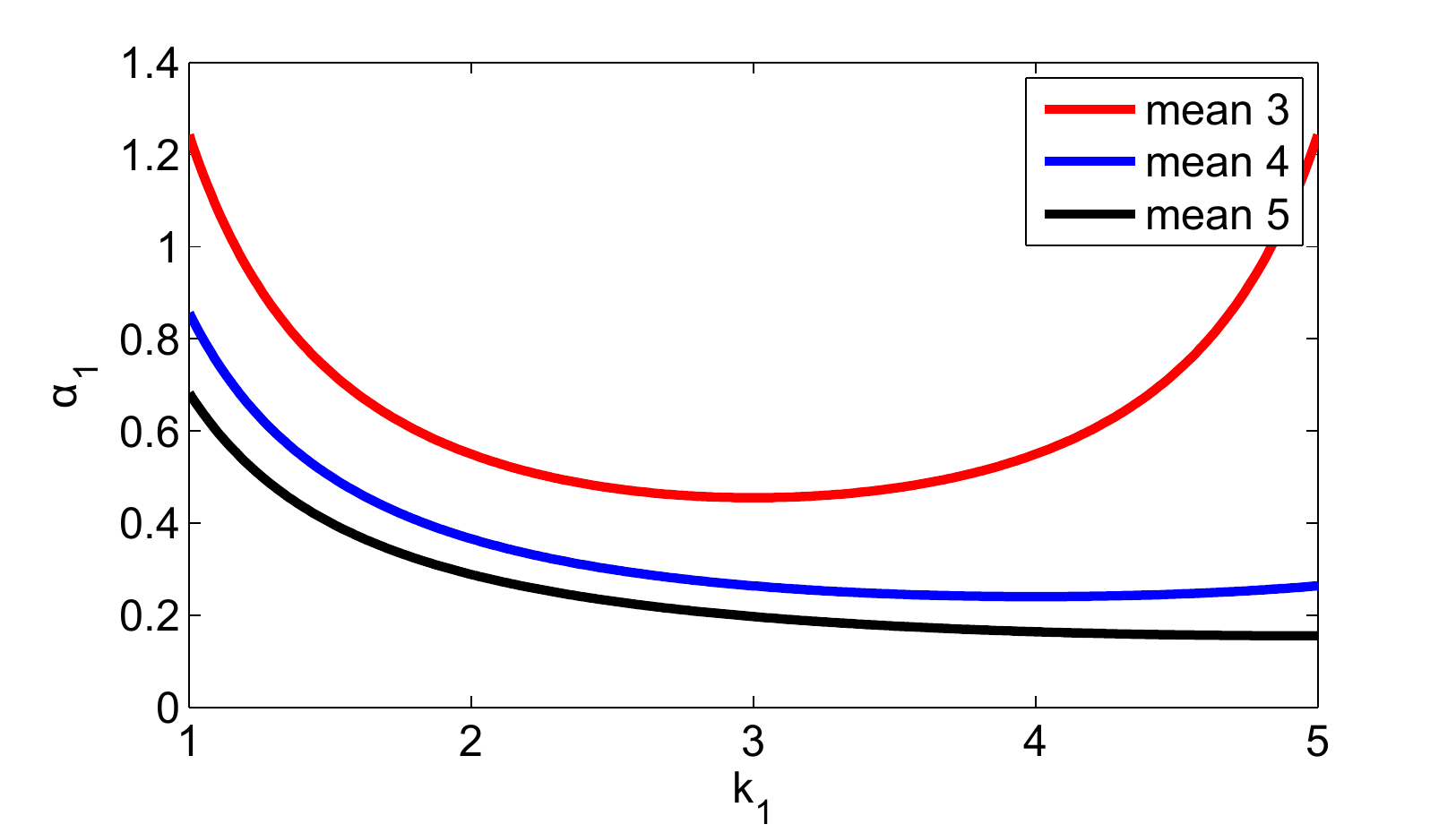} \hfill
   \includegraphics[width=0.45\textwidth]{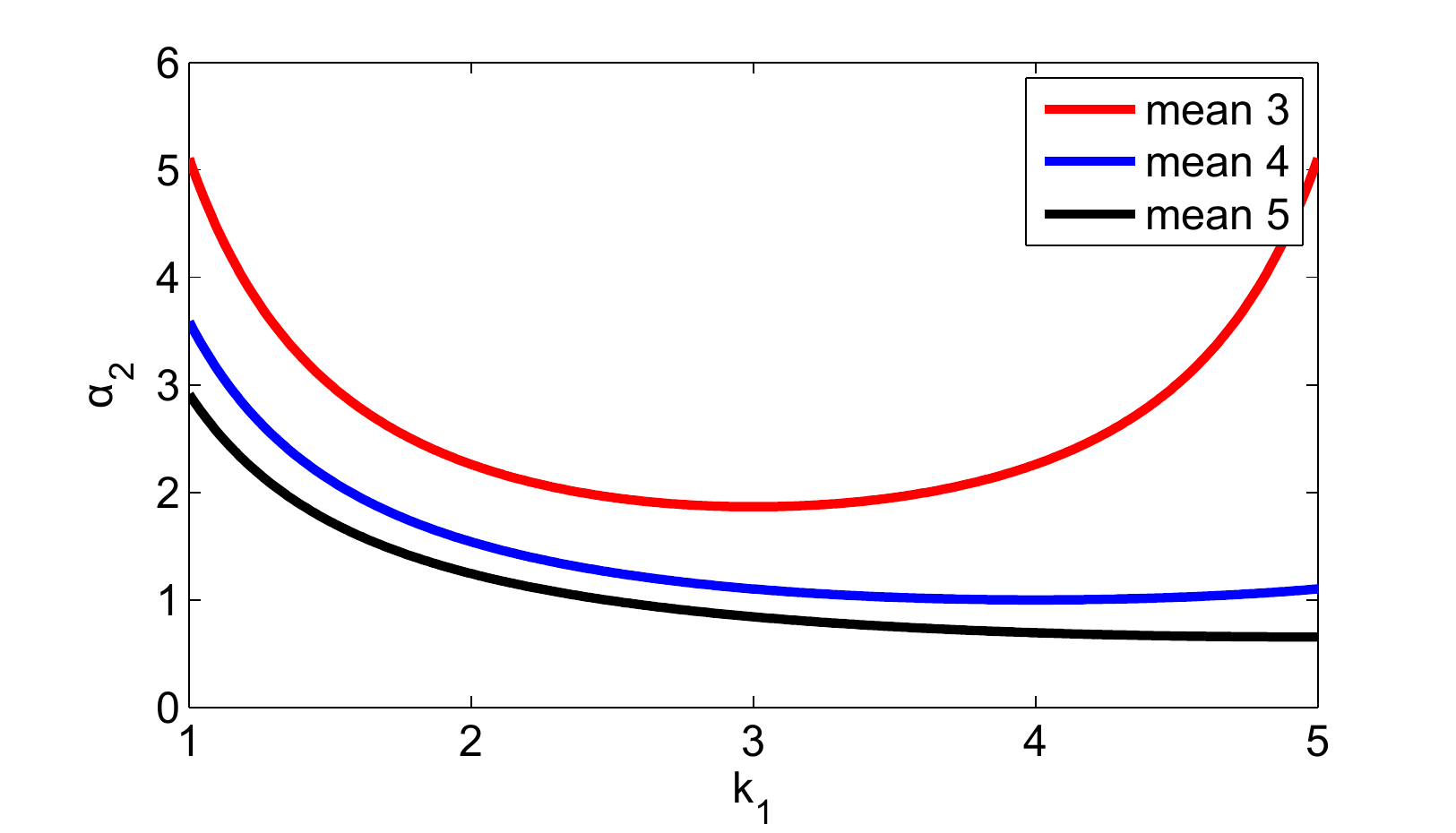}
\caption{Critical rates of diffusion for period trees with period $m=2$.  On the left, we plot $\alpha_1$ as a function of $k_1$ with   $k_2$ fixed to preserve the mean degree.  On the right, we plot $\alpha_2$ as a function of $k_1$.  Note that in both case the periodic heterogeneity increases the critical diffusion rates.  
}
\label{fig:per}
\end{figure}

Note that in both cases, the heterogeneity induced by the periodic coefficients leads to increases in the critical values as compared to the case of the homogeneous tree.  Take for example $k_1=1$, $k_2=5$ and compare $\alpha_2$ to the critical value for the homogeneous tree with degree $k=3$.  In the homogeneous case, the transition occurs at approximately $1.86$ whereas the periodic case has a critical value near $5.1$.  Heterogeneity in the tree can therefore allow for front propagation over a much larger region in parameter space.  
\end{ex}

\section{Aggregate dynamics: growth of the total population}\label{sec:pop}
In Section~\ref{sec:tree}, we determined the pointwise behavior of $u_n(t)$.  We now turn our attention to the growth rate of the total population within the homogeneous tree.  Let
\[ P(t)=\sum_{n=1}^\infty k^{n-1} u_n(t).\]
Note that $P(t)$ is equivalent to the $l^1$ norm of the solution.  We will find it convenient to express $P(t)$ in terms of 
 $\sum_{n=1}^\infty w_n(t)$, where $w_n(t)$ is the total population at level $n$ and is related to $u_n$ via  
\be w_n(t)=k^{n-1} u_n(t).\label{eq:utow} \ee
The evolution of $w_n(t)$ is given by the  lattice dynamical system
\begin{eqnarray}
  \frac{dw_n}{dt}&=& \alpha\left( kw_{n-1}-(k+1)w_n+w_{n+1}\right)+k^nf(k^{-n}w_n), \quad n\geq 2,\nonumber \\
  \frac{dw_1}{dt}&=& \alpha\left( -kw_1+w_2\right)+kf(k^{-1}w_1). \label{eq:wtree} 
\end{eqnarray}
This system is equivalent to (\ref{eq:tree}) after transforming to an exponentially weighted space.  

In Figure~\ref{fig:spacetime}, we show numerical simulations of (\ref{eq:tree}) and (\ref{eq:wtree}), with $w_n$ normalized.  We observe that for small values of $\alpha$, the maximal growth rate appears to occur at the front interface, while for larger values of $\alpha$ the maximal growth rate is achieved ahead of the front interface.  This transition occurs at $\alpha=\alpha_1(k)$.  We have two primary goals in this section.  These are to identify the maximal growth rate of $P(t)$ and determine the location in space-time for which the maximal growth rate is achieved.  As was the case in Section~\ref{sec:tree}, we begin with the system linearized about the unstable state and then consider the implications for the nonlinear system.  

\subsection{Linear growth rates  for $P(t)$}
The growth rate of $P(t)$ for the linear problem 
\[ u_t=\alpha \Delta_Gu+u,\]
can be determined from the $l^1$ spectrum of the linear operator $\alpha \Delta_G+\mathrm{Id}$.  The spectrum of the graph Laplacian on homogeneous trees has been the focus of intense study and its spectrum for any $l^p$ space is known explicitly, see for example \cite{mohar89}.  The $l^1$ spectrum of $\Delta_G$ is an ellipse
\[ \sigma_1(\Delta_G)=\left\{ z\in\mathbb{C} \ | \ |z-k-1-2\sqrt{k}|+|z-k-1+2\sqrt{k}|\leq 2(k+1)\right\}, \]
which is contained in the left half of the complex plane aside from the point at zero.  As a result, the $l^1$ spectrum of $\alpha \Delta_G+\mathrm{Id}$ extends to $\lambda=1$ and we expect a maximal exponential growth rate of $e^t$.

Of course, we are ultimately interested in the population of the nonlinear system. It is conceivable that the nonlinearity could constrain the growth in such a way that slower growth rates for $P(t)$ are observed.  To determine whether this occurs, we require pointwise information on the growth rates of the total population.  

\subsection{Pointwise growth rates and spreading speeds for the total population}

In a fashion analogous to the system for $u_n$, we can linearize (\ref{eq:wtree}) about the unstable zero state and derive a dispersion relation for  $w_n$,
\be \tilde{d}(\lambda,\gamma)=\alpha\left(ke^{\gamma}-(k+1)+e^{-\gamma}\right)+1-\lambda,\ee
as well as the equivalent dispersion relation in a co-moving frame $\tilde{d}_s(\lambda,\gamma)=\tilde{d}(\lambda+s\gamma,\gamma)$.
We again remark that transforming from $u_n$ to $w_n$ amounts to transforming $u_n$ to a weighted space.  Any pinched double root $(\lambda^*,\gamma^*)$ of the dispersion relation $d_s(\lambda,\gamma)$ corresponds to a pinched double root of the dispersion relation $\tilde{d}_s$ for $(\lambda,\gamma)=(\lambda^*+s\log{k},\gamma^*-\log{k})$; see the following calculation

\begin{eqnarray*} \left(\begin{array}{c} d_s(\lambda^*,\gamma^*) \\
\partial_\gamma d_s(\lambda^*,\gamma^*)\end{array}\right)&=& 
\left(\begin{array}{c} 
\alpha\left(e^{\gamma^*}-k-1+ke^{-\gamma^*}\right)+1-s\gamma^*-\lambda^* \\
\alpha\left(e^{\gamma^*}-ke^{-\gamma^*}\right)-s \end{array}\right) \\
& =& \left(\begin{array}{c} 
\alpha\left(ke^{\gamma^*-\log{k}}-k-1+e^{-\gamma^*+\log{k}}\right)+1-s(\gamma^*-\log{k})-s\log{k}-\lambda^* \\
\alpha\left(ke^{\gamma^*-\log{k}}-e^{-\gamma^*+\log{k}}\right)-s \end{array}\right) \\
&=& \left(\begin{array}{c} \tilde{d}_s(\lambda^*+s\log{k},\gamma^*-\log{k})  \\
\partial_\gamma \tilde{d}_s(\lambda^*+s\log{k},\gamma^*-\log{k}) \end{array}\right)
\end{eqnarray*}
Recall that the $\lambda^*$ value gives the pointwise growth rate of the solution in a frame moving with speed $s$.  The decay (growth) rate of any pinched double root for the local population $u_n(t)$ is thus increased by $s\log(k)$ in the weighted space.  
Thus,  pointwise growth rates of $w_n$ may be positive even when growth rates for $u_n$ are negative.  We have the following.  

\begin{lem}\label{lem:sg} The linear growth rate is maximized in a frame moving at speed 
\[ s^*=\alpha(k-1),\]
i.e. at the $u$-group velocity of the mode $e^{-n\log{k}}$. 
\end{lem}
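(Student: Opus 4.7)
The plan is to exploit the weighted-space correspondence spelled out in the displayed calculation immediately preceding the statement. Write $\Lambda(s)$ for the pointwise linear growth rate of $u_n$ in a frame moving with speed $s$ (namely, the real part of the pinched double root of $d_s$) and $\tilde\Lambda(s)$ for the analogous quantity attached to $w_n$. The correspondence immediately yields
$$\tilde\Lambda(s) = \Lambda(s) + s\log k,$$
so that the maximizing speed $s^\ast$ is the one at which $\Lambda'(s^\ast) = -\log k$.

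Next I would compute $\Lambda'(s)$ via the pinched double root. Setting $\lambda_0(\gamma) = \alpha(e^{\gamma} - k - 1 + ke^{-\gamma}) + 1$, the second component of $F$ in (\ref{eq:F}) states that the saddle branch $\gamma(s)$ solves $\lambda_0'(\gamma(s)) = s$; this equation is a bijection $\mathbb{R}\to\mathbb{R}$ because $\lambda_0''(\gamma) = \alpha(e^{\gamma}+ke^{-\gamma})>0$. Along that branch $\Lambda(s) = \lambda_0(\gamma(s)) - s\gamma(s)$, and a one-line Legendre-transform computation gives $\Lambda'(s) = -\gamma(s)$. The condition $\Lambda'(s^\ast) = -\log k$ therefore forces $\gamma(s^\ast) = \log k$, and substituting back into the group-velocity equation yields
$$s^\ast = \lambda_0'(\log k) = \alpha\bigl(e^{\log k} - k e^{-\log k}\bigr) = \alpha(k-1),$$
which is precisely the $u$-group velocity evaluated at the mode $e^{-n\log k}$. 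To verify that this is a maximum rather than a minimum, I would note $\tilde\Lambda''(s) = \Lambda''(s) = -\gamma'(s) = -1/\lambda_0''(\gamma(s)) < 0$.

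The main obstacle is not the calculus but the pinching: one must check that the real saddle at $\gamma(s)$ remains a genuine pinched double root as $s$ traverses $s^\ast$, since $\gamma^\ast=\log k$ exceeds the $l^2$-critical rate $\tfrac12\log k$ and corresponds in the $w$-variable to the mode $\gamma^\ast - \log k = 0$, which is non-summable. I would handle this by a standard deformation argument: strict convexity of $\lambda_0$ on $\mathbb{R}$ ensures that the two real roots of $d_s(\lambda,\cdot)=0$ collide transversally at a single real saddle for every $s$, so the analytic continuation from any $s$ at which pinching is already known (for example at $s = s_{lin}$, where it was used in Section~\ref{sec:tree}) extends along the entire branch and in particular to $s^\ast$. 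With pinching in hand the identification $s^\ast = \alpha(k-1)$ follows.
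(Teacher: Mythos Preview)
Your argument is correct, but it differs in approach from the paper's. The paper works directly in the $w$-variable: it simply observes that $(\lambda,\gamma)=(1,0)$ is always a root of $\tilde d$, and then checks by direct substitution that in the moving frame with $s^\ast=\alpha(k-1)$ one also has $\partial_\gamma\tilde d_{s^\ast}(1,0)=0$, so this becomes a double root. That is the entire proof; maximality is implicit from the earlier $l^1$-spectrum discussion (the growth rate cannot exceed~$1$, and here it is attained).

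Your route instead pulls the problem back to the $u$-variable via the correspondence $\tilde\Lambda(s)=\Lambda(s)+s\log k$, recognises $\Lambda$ as a Legendre transform, computes $\Lambda'(s)=-\gamma(s)$, and solves $\gamma(s^\ast)=\log k$. This is slightly longer but more structural: it locates the maximiser by calculus rather than by inspection, explicitly verifies concavity via $\tilde\Lambda''(s)=-1/\lambda_0''<0$, and adds a pinching check that the paper omits. The paper's approach is faster once one has the $w$-dispersion relation in hand; yours explains \emph{why} the critical speed must equal the group velocity at the weight~$\log k$ and would generalise more readily to other weights.
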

\begin{proof} Note that $\gamma=0$ and $\lambda=1$ is always a root of the dispersion relation, i.e. $\tilde{d}(1,0)=0$.  In the original frame of reference, this does not correspond to a pinched double root since $\partial_\gamma \tilde{d}(1,0)=\alpha(1-k)\neq 0$.  However, passing to a frame moving with speed $s^*$ we find $\tilde{d}_{s^*}(1,0)=0$ and $\partial_\gamma \tilde{d}_{s^*}(1,0)=0$.  
\end{proof}

\begin{rmk} Lemma~\ref{lem:sg} states that the maximal growth rate of a linear problem in an exponentially weighted space occurs on a ray in space-time moving with the group velocity of the exponential weight.  This is a general fact and there are several alternative avenues by which to verify the statement of Lemma~\ref{lem:sg}.  Similar results are often obtained by a asymptotic analysis of the Green's function via stationary phase.  In the context of a homogeneous tree, this fact can be verified from a direct analysis of the heat kernel, see for example \cite{chung99,chinta15}. For the case of parabolic partial differential equations, this was shown in Lemma 6.3 of \cite{holzer14}.  Finally, we mention \cite{holzer16}, where the appearance of pointwise instabilities when transforming to weighted spaces leads to some unexpected spreading speeds in a system of coupled Fisher-KPP equations.  
\end{rmk}

We now compare the speed $s^*$ to the linear spreading speed $s_{lin}$ and make the the following observations,
\begin{itemize}
\item For $\alpha=\alpha_1$, the maximal growth rate is realized at the linear spreading speed; that is, $s^*=s_1$
\item For $\alpha<\alpha_1$ we have that $s^*<s_{lin}$, while $s^*>s_{lin}$ for $\alpha>\alpha_1$. 
\end{itemize}

These observations lead us to conjecture that the the population grows fastest when $s^*>s_{lin}$ since in this regime the maximal growth rate is achieved ahead of the front interface where the nonlinearity is negligible.

\subsection{Growth rates in the nonlinear problem}\label{sec:GRnonlinear}
We return our attention to the full nonlinear system and study the growth rate of the total population.  We expect that the growth rates predicted by the linearized system to place an upper bound on the growth rates in the nonlinear system.  

A key distinction is made between $\alpha<\alpha_1(k)$ and $\alpha>\alpha_1(k)$.  For $\alpha$  below the $l^1$ critical value, the maximal linear growth rate at level $n$ is achieved in a frame of reference that propagates {\em slower} than the linear spreading speed.  In view of Theorem~\ref{thm:spread}, this means that the invasion speed of the species exceeds $s^*$.  Furthermore, since $\gamma_{lin}>\log k$ we expect the mass in the system to be added at the front interface.  The dynamics of the total population can then be approximated by
\[ P(t)\approx \sum_{n=1}^\infty k^{n-1}\min\{1,e^{-\gamma_{lin}(n-s_{lin}t)}\}.\]
When $\gamma_{lin}>\log(k)$ then this sum converges and we expect
\[ P(t)\approx \sum_{n=1}^{\floor{ s_{lin}t}} k^{n-1}\approx\frac{k^{s_{lin}t}-1}{k-1}.\]
Therefore, we expect a maximal growth rate of $e^{\log(k)s_{lin} t}$ when $\gamma_{lin}>\log(k)$.  
Note that for $\alpha_1$, it is true that $\log(k)s_1=1$.  

For $\alpha>\alpha_1(k)$, the speed $s^*$ exceeds that of $s_{lin}$ and therefore the maximal linear growth rate is achieved ahead of the front interface.  Since the solution $u_n(t)$ is pointwise close to zero in this region, we expect that the linear dynamics remain valid here and that $P(t)\approx Ce^t$ for these parameter values.  This is confirmed in numerical simulations, see Figure~\ref{fig:speedmass}.

\begin{figure}
\includegraphics[width=0.3\textwidth]{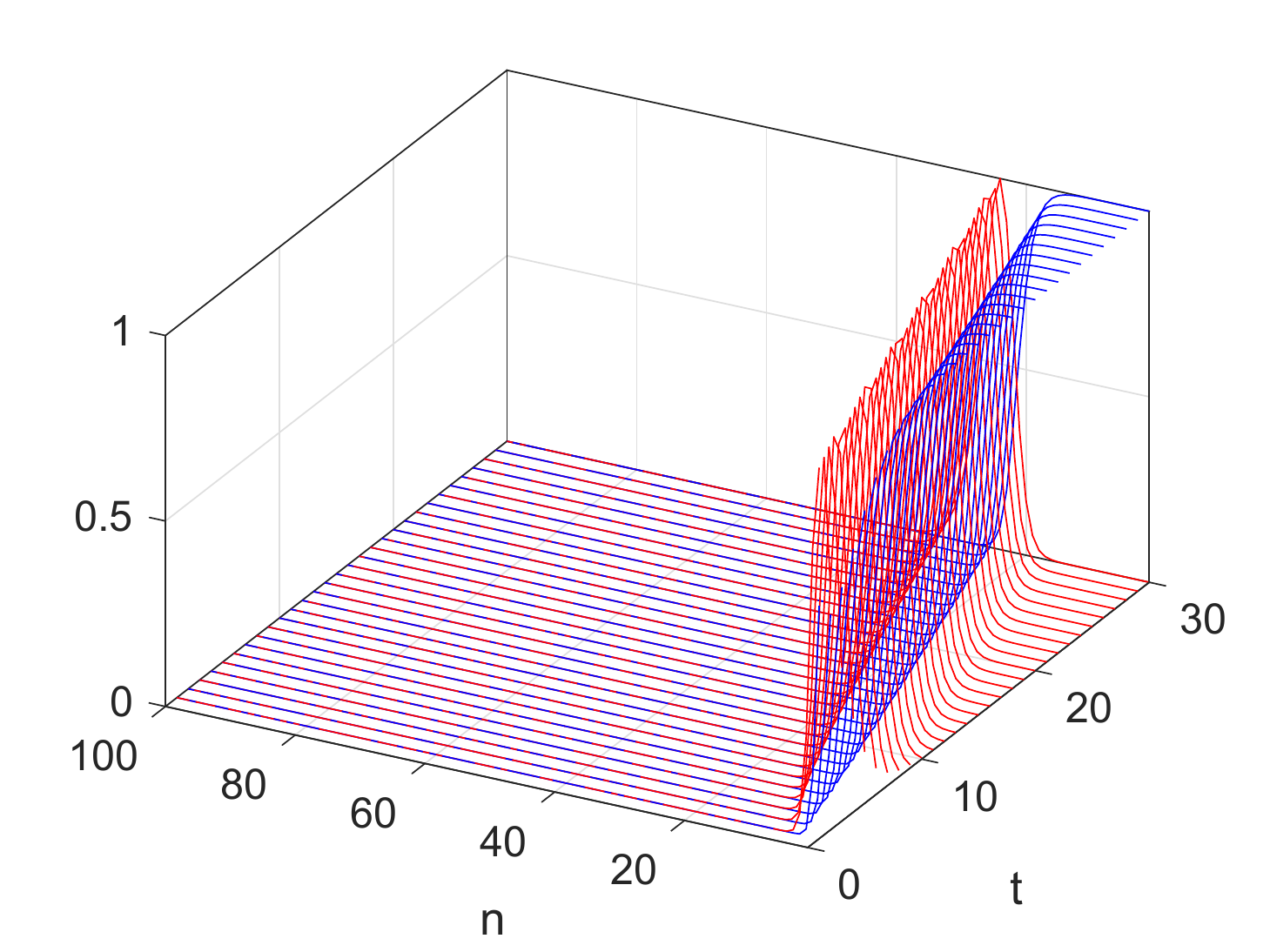}\hfill
\includegraphics[width=0.3\textwidth]{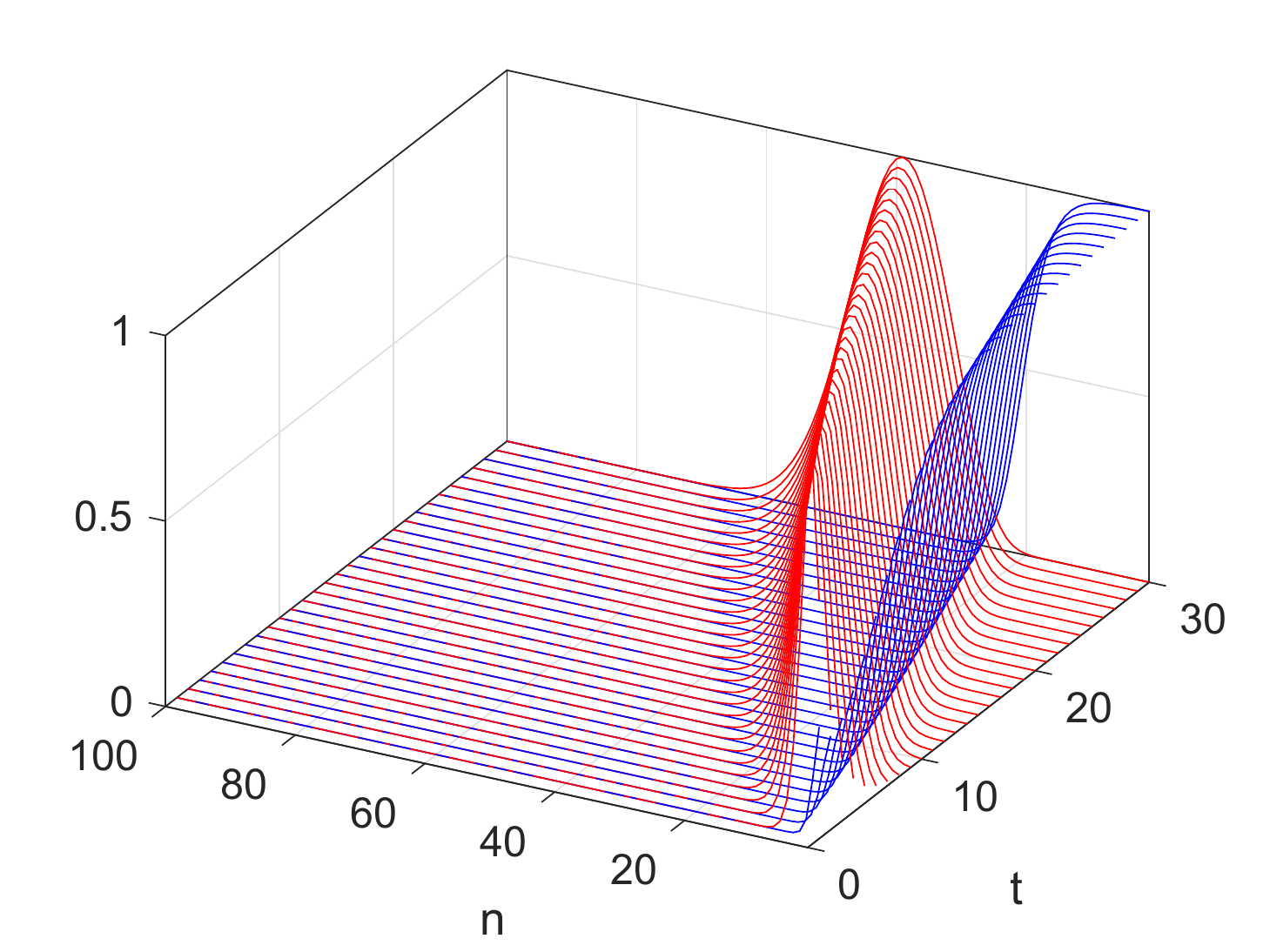}\hfill
\includegraphics[width=0.3\textwidth]{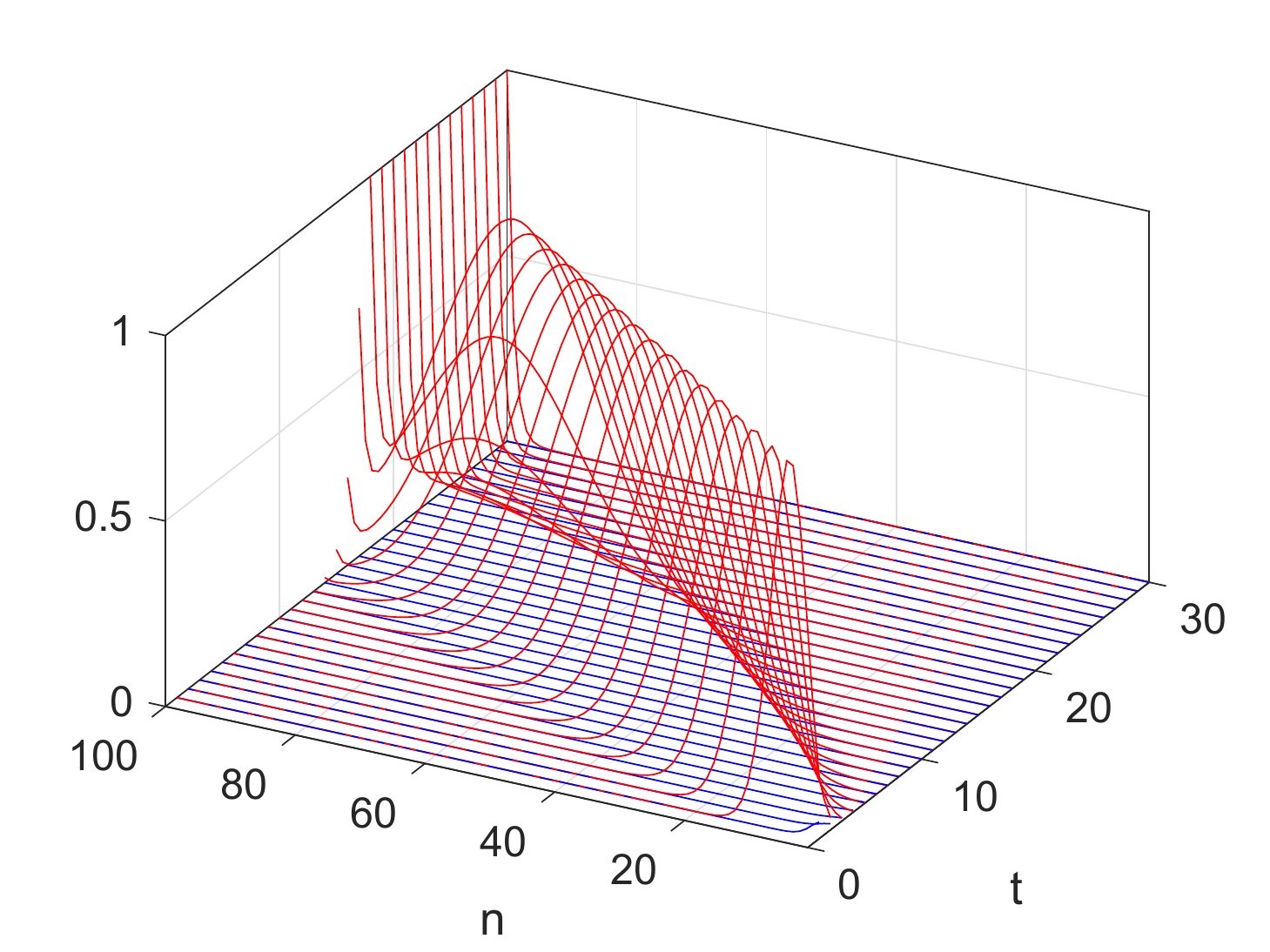}
\caption{Numerical simulations of (\ref{eq:tree}) with $k=3$ and for $\alpha=0.2$ (left), $\alpha=0.8$ (middle) and $\alpha=2.2$ (right).  The blue curves are $u_n(t)$ while the red curves depict the normalized population at each level, i.e. $w_n(t)/\mathrm{max}_n(w_n(t))$.  Note that $0.2<\alpha_1(3)<0.8<\alpha_2(3)<2.2$.  For $\alpha=0.2$, we observe that the maximal population is concentrated at the front interface. For $\alpha=0.8$, the maximal population is concentrated ahead of the front interface. Finally, for $\alpha=2.2$ the local population at any fixed node converges to zero, but the total population grows and eventually is concentrated at the final level of the tree.   }\label{fig:spacetime}
\end{figure}

\begin{figure}
   \includegraphics[width=0.4\textwidth]{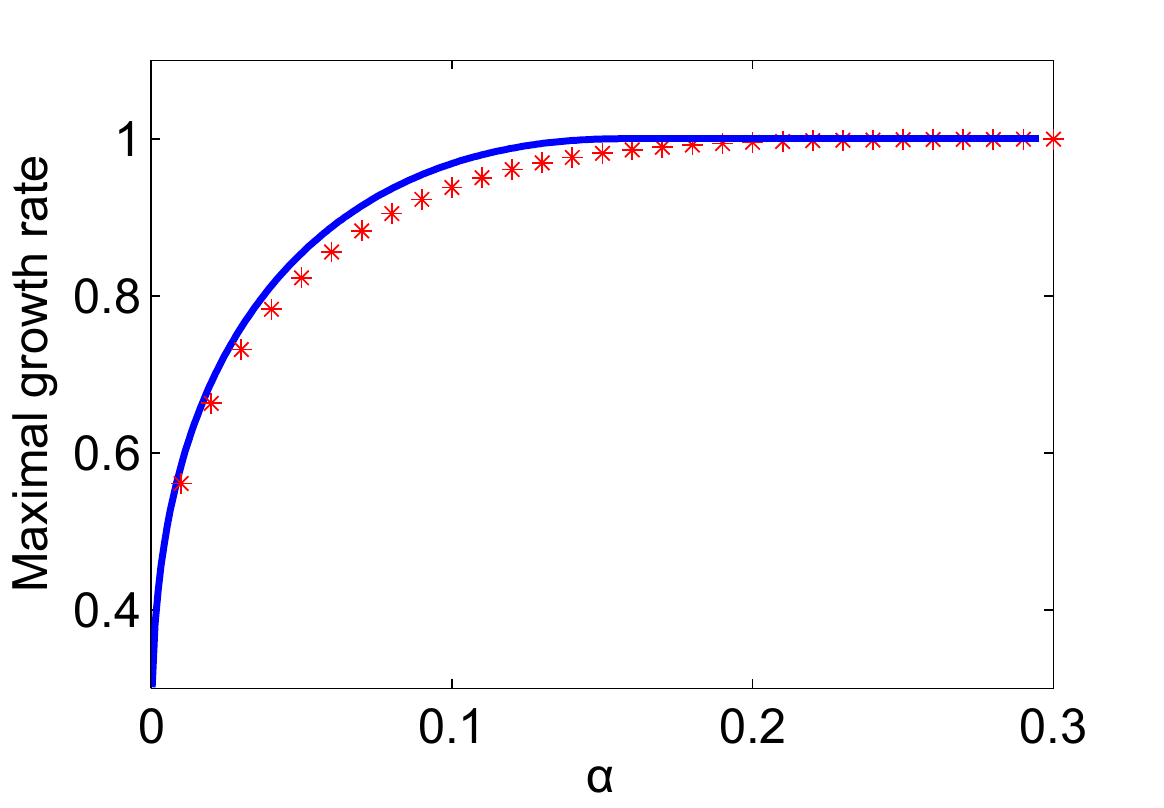} \hfill
   \includegraphics[width=0.45\textwidth]{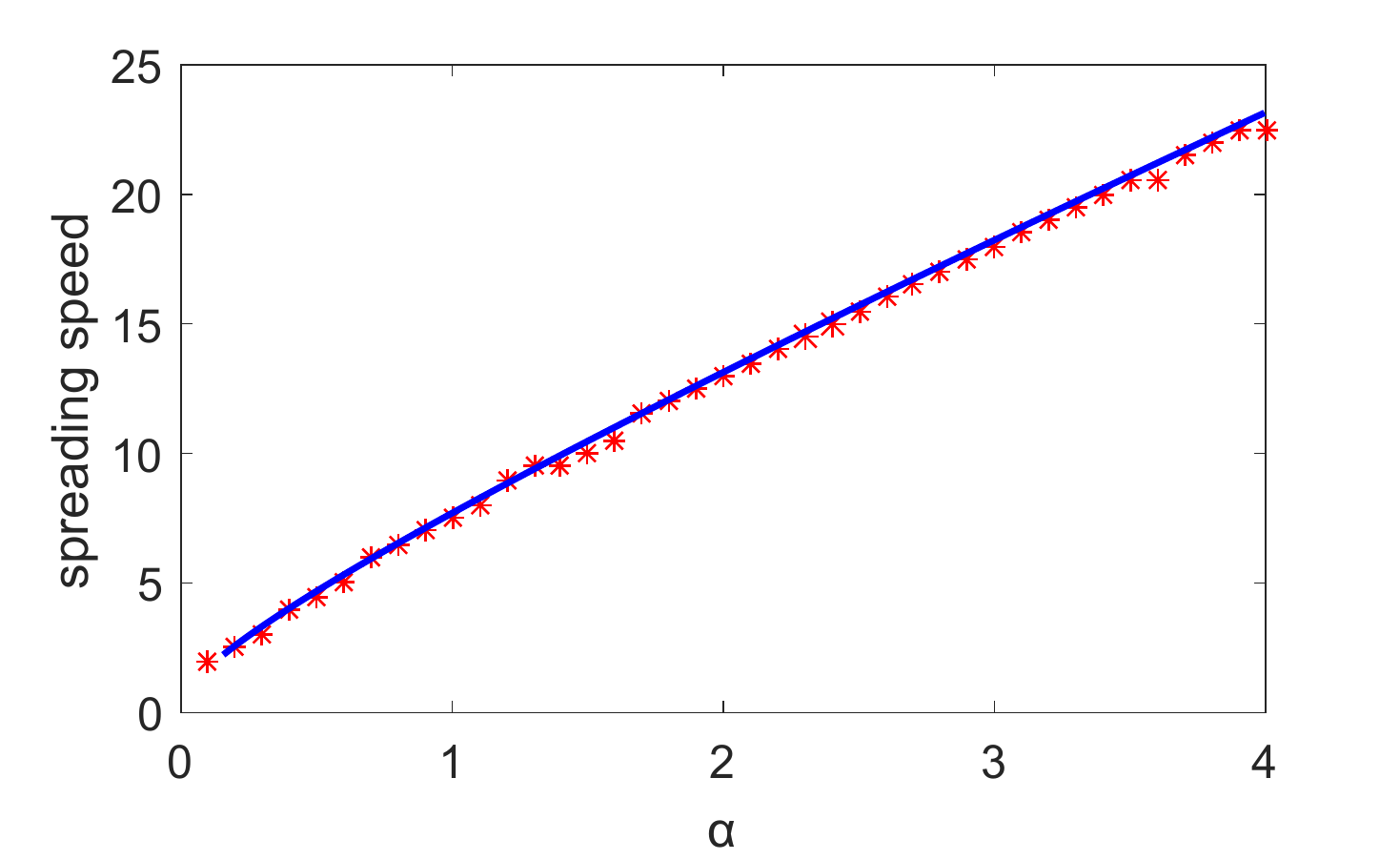}
\caption{On the left, we compare predictions for the exponential growth rate of the maximum of $w_n(t)$ as a function of $\alpha$ (blue line) against the exponential growth rates of $M(t)$ observed in direct numerical simulations (asterisks) for $k=5$.  On the right, we compare numerically observed spreading speeds for $w_n(t)$ (asterisks) versus linear spreading speeds determined numerically from the pinched double root criterion applied to $\tilde{d}_s(\gamma,\lambda)$ (blue line).  Here we have taken $k=5$. 
}
\label{fig:speedmass}
\end{figure}

\section{Erd\H{o}s-R\'enyi random graphs}\label{sec:ER}

We now turn our attention to more complicated network topologies.  We will use the analytical results obtained for the homogeneous tree to make predictions for the dynamics of the Fisher-KPP where $G$ is an Erd\H{o}s-R\'enyi random graph.  These predictions will then be compared to observations based upon numerical simulations.

The analysis performed on the homogeneous trees leads us to make two qualitative predictions for more general networks.  First, based upon the leading order independence of the linear spreading speed on the degree of the tree,  we expect that for $\alpha$ asymptotically  small the pointwise dynamics of the general system should be well approximated by a finite tree where the only relevant feature is the distance of a node from the location of initiation. Second, based upon our analysis in Section~\ref{sec:pop} we expect sub-linear growth rates for the total population for $\alpha$ small.  

We first review the Erd\H{o}s-R\'enyi random graph model \cite{erdos59}.   Let $N$ denote the total number of nodes in the graph and select some $p\in (0,1)$.  Erd\H{o}s-R\'enyi random graphs are constructed by assigning edges to the graph with fixed probability $p$.   We are interested in the case where $N\gg 1$ and the expected degree of each node $k_{ER}+1=Np$ is small.  We select one node at random and call that node the root.  We denote the solution at that node $u_1(t)$ and consider (\ref{eq:KPPonG}) with $u_1(0)=1$ and zero initial condition at every other node.

Several factors motivate our choice of Erd\H{o}s-R\'enyi random graphs for study.  Notable among these factors is their prevalence and popularity in the literature.  However, another important factor is the informal observation that Erd\H{o}s-R\'enyi random graphs with small $p$ appear locally tree-like: each node is connected  {\em roughly} $k_{ER}+1$ other nodes, which in turn will each be connected to {\em roughly} $k_{ER}$ which are themselves not connected with high probability.  We refer the reader to \cite{bollobas01,durrett07} and the reference therein for more rigorous studies of  Erd\H{o}s-R\'enyi graphs.

We construct networks with size ranging from $N=60,000$ to $N=500,000$, see \cite{batagelj}, and vary the expected degree from three ($k_{ER}=2$) to sixteen ($k_{ER}=15$).   Note that for the small  $p$ values considered here, the constructed networks are disconnected with probability one.  Therefore, when we discuss the network dynamics below we always restrict our attention to the largest connected component.  Numerically, solutions are computed using explicit Euler with timestep $\Delta t=.0025$.  Changing the timestep did not significantly alter the results.

\subsection{Pointwise dynamics: arrival times}
In this section, we focus on the pointwise dynamics of the system and describe numerical results characterizing the spreading speed on random graphs of Erd\H{o}s R\'eyni type.  

For finite graphs the only stable steady state is the uniform state at one.  We assign to each node a label $i\in \{1, 2, \dots, N\}$.  Let $d(i,j)$ be the shortest distance along the network between nodes $i$ and $j$.  Denote the node where the species is initialized as $i=1$ and specify initial conditions where $u_1(0)=1$ and $u_j(0)=0$ for all $j\neq 1$.  The primary descriptor of the dynamics in this situation is $\tau_i$, the {\em arrival time at node $i\in V$}, defined as
\[ \tau_i =\min_{t\geq 0} u_i(t)>\kappa,\]
for some threshold $0<\kappa<1$.  

Arrival times as a function of the distance from the initial node are plotted in Figure~\ref{fig:ATER}.  For very small $\alpha$, we observe a strong linear relationship between the arrival time and distance from the node $i=1$.   As $\alpha$ is increased this linear relationship becomes less apparent.  

We will compute the  {\em mean arrival time at level $l$},
\[ M_l=\frac{1}{N_l}\sum_{i\in V d(i,1)=l} \tau_i.\]
Based upon this data we find a best fit linear regression line as follows.  In order to minimize the effect of initial transients or boundary effects, we consider only those $M_l$
corresponding to distances greater than two from the initial node or less than two from the furthest node.  A linear regression is then computed and the reciprocal of the slope is called the numerically observed {\em mean spreading speed}.  These regression lines are shown in green in the Figure~\ref{fig:ATER} with green asterisks denoting $M_l$.

Mean spreading speeds are computed and plotted in Figure~\ref{fig:ERss}.  We compare these speeds to spreading speeds on homogeneous trees computed in Section~\ref{sec:tree}.  Before drawing any conclusions, some remarks regarding the relationship between random graphs and regular trees are in order.  Important in any comparison of spreading speeds in the random graph and the homogeneous tree is a selection of a value of $k$ to be used for comparison.  This choice is not obvious, nor is it clear that one such $k$ will be sufficient.  For large Erd\H{o}s-R\'enyi graphs, the average degree in the graph will converge to $k_{ER}+1$ as $N\to\infty$.  At the same time, however, if $k_{ER}$ is small then the Erd\H{o}s-R\'enyi graph will be disconnected with probability one.  Within the largest connected component the average degree will exceed $k_{ER}+1$.  The simplest choice is then to  take  $k$ to be the average degree of all nodes over which the linear regression is computed.  With this choice, the observed and theoretical speeds are close, however the linear spreading speed consistently overestimates the observed mean spreading speed.  This discrepancy could be due to several factors including the heterogeneity of the network. However, we also note that in the system (\ref{eq:tree}) it takes some time for initial data to converge to a traveling front and even then the convergence is notoriously slow \cite{bramson83}.  As a result, it is perhaps unrealistic to expect that spreading speeds measured directly in numerical simulations on the random graph to match exactly with predictions from the linear spreading speed.

We comment briefly on the dynamics for large values of $\alpha$.  For $\alpha$ sufficiently large the first arrival times cease to be monotonic with respect to distance from the original node.   For even larger values of $\alpha$ the arrival time for most nodes on the graph is approximately constant.  This is to be expected.  For $\alpha$ sufficiently large the dominant eigenvalue of the linearization about zero is one, whose eigenvector is the constant vector $\vec{1}$.  Projecting the initial condition onto this vector we expect leading order dynamics
\[ u(t)\sim \frac{1}{N}e^t \vec{1},\]
from which we find a homogeneous arrival time that scales with $\log N$.  This is consistent with the timescale observed for large values of $\alpha$.

\begin{figure}[ht]
\centering
 \subfigure[$\alpha=0.001$.]{\includegraphics[width=0.3\textwidth]{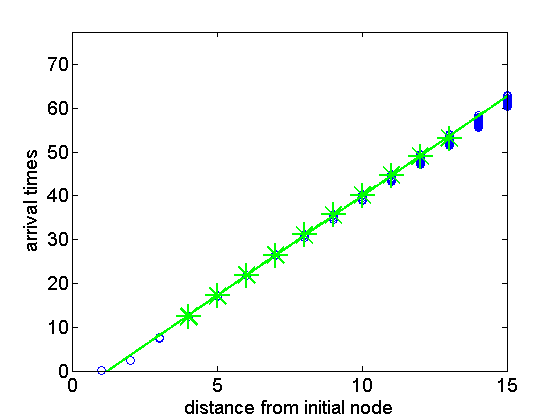}}
\subfigure[$\alpha=0.01$.]{\includegraphics[width=0.3\textwidth]{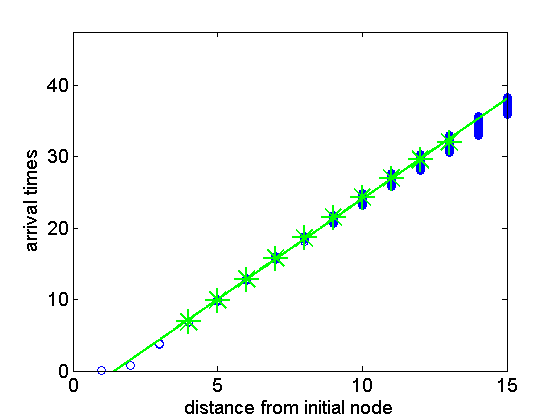}}
\subfigure[$\alpha=0.1$.]{\includegraphics[width=0.3\textwidth]{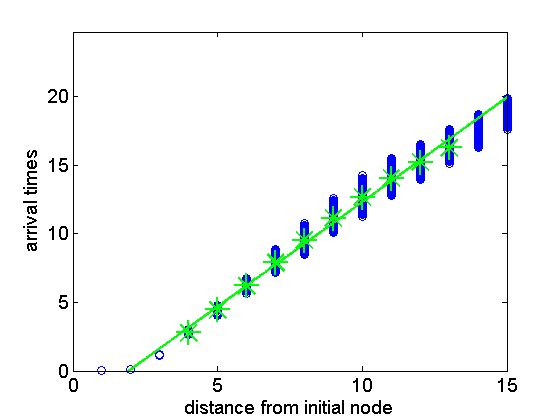}}
 \subfigure[$\alpha=0.2$.]{\includegraphics[width=0.3\textwidth]{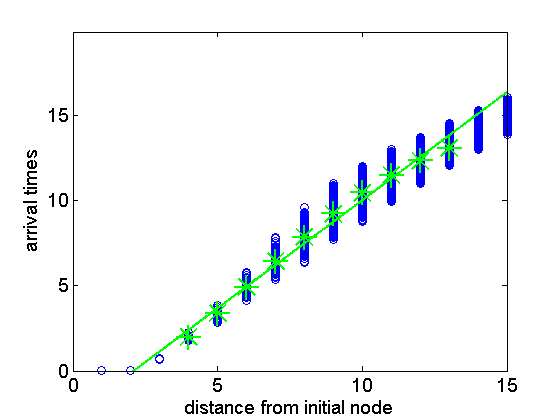}}
 \subfigure[$\alpha=0.5$.]{\includegraphics[width=0.3\textwidth]{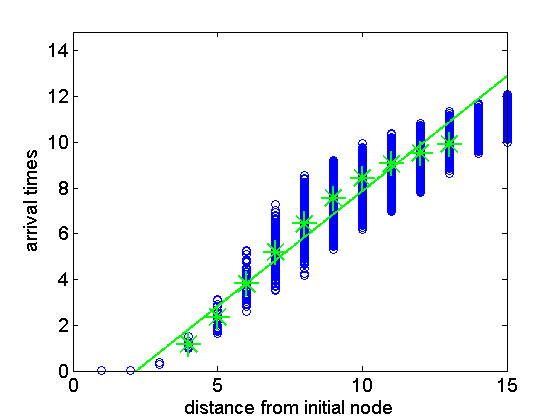}}
\subfigure[$\alpha=1.0$.]{\includegraphics[width=0.3\textwidth]{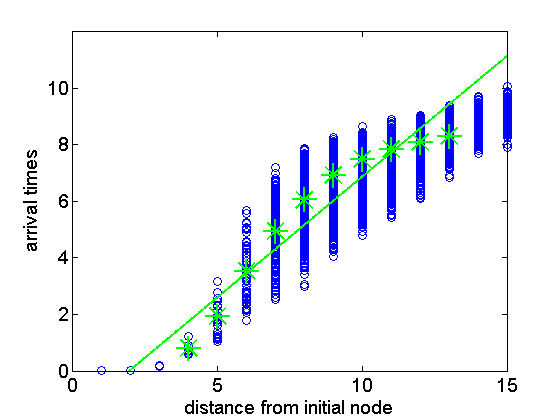}}
 \subfigure[$\alpha=2.0$.]{\includegraphics[width=0.3\textwidth]{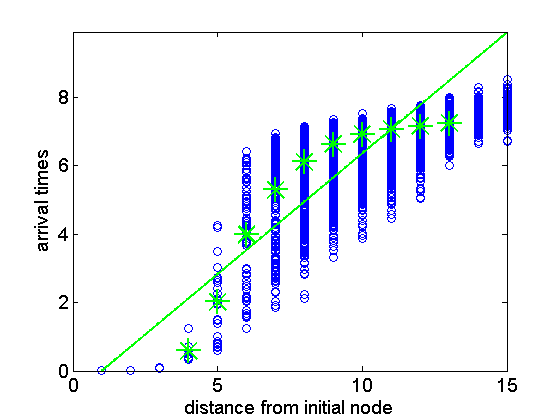}}
 \subfigure[$\alpha=4.0$.]{\includegraphics[width=0.3\textwidth]{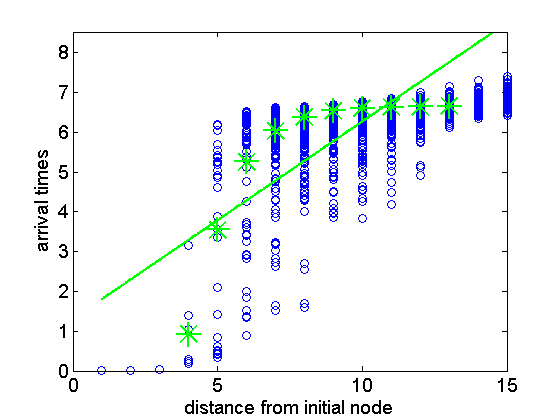}}
\caption{Arrival times for an Erd\H{o}s R\'eyni graph with $N=60,000$ and expected degree $k_{ER}=2$.  Various values of $\alpha$ are considered.  In green is the best fit linear approximation for the mean arrival times for nodes with distance between $3$ and $12$ from the initial location.      }
\label{fig:ATER}
\end{figure}

\begin{figure}
\includegraphics[width=0.4\textwidth]{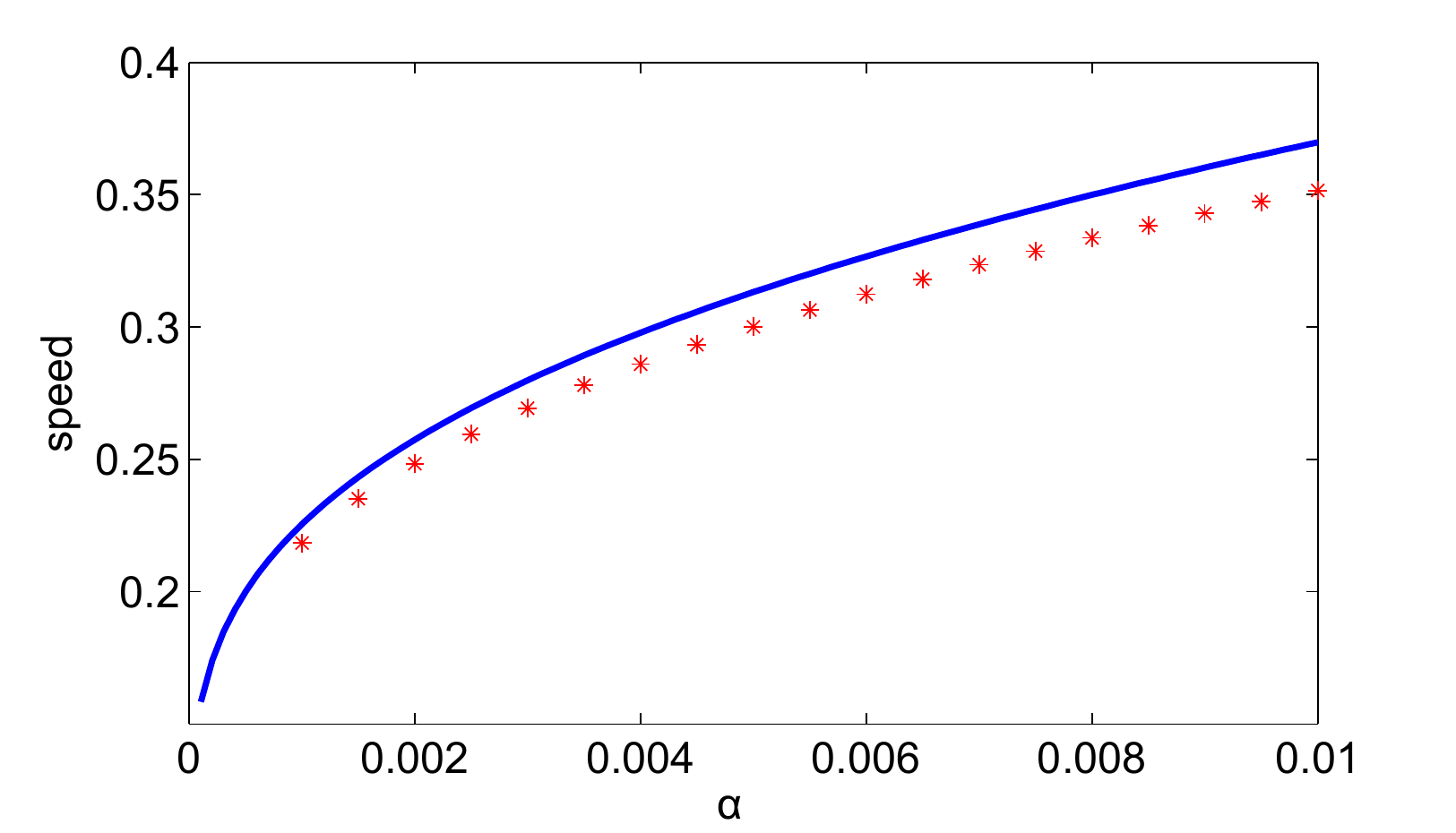}\hfill
\includegraphics[width=0.4\textwidth]{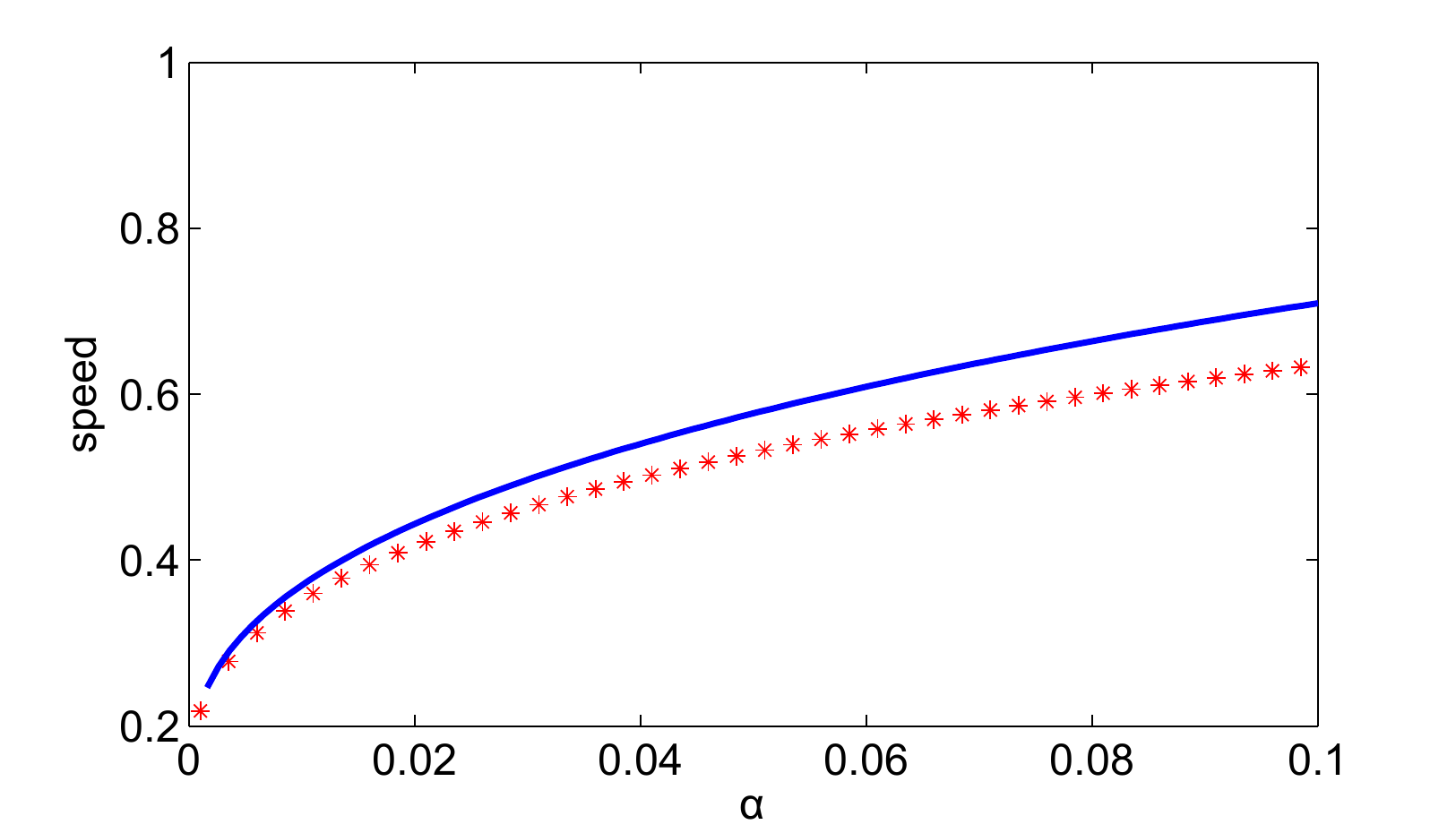}
\caption{Speed associated to the mean arrival times in numerical simulations on  an Erd\H{o}s R\'eyni graph with $N=60,000$ and expected degree $k_{ER}=2$ are shown in asterisks.  The blue curve is the spreading speed predicted by the analysis in Section~\ref{sec:tree}  for the homogeneous tree with $k=2.54$, found by numerically computing roots of (\ref{eq:F}).  This value is chosen since it is one less than  the mean degree of the network over those nodes with distance between $3$ and $12$ from the original location.      }\label{fig:ERss}
\end{figure}

\subsection{Aggregate dynamics: population growth rate}

We now investigate the exponential growth rate of the total population for (\ref{eq:KPPonG}) for Erd\H{o}s-R\'enyi random graphs.  We compare these growth rates to those observed on the homogeneous tree.  

The logarithm of the total population is shown in Figure~\ref{fig:loggrowth}.  After an initial transient, the growth rate appears to be roughly linear corresponding to exponential growth.  We measure this growth rate and compare it to similar computations in the homogeneous tree.  

As was the case for the pointwise analysis, an immediate challenge lies in determining the precise value of $k$ to use in the comparison to the homogeneous tree.   Given the variation in the average degrees across levels of the network, we instead compare numerically observed growth rates in the random graphs with observed growth rates in homogeneous trees that we expect to provide bounds on the total growth rate.  The results of these simulations are depicted in Figure~\ref{fig:ERgrowthrate}.

\begin{rmk} Population growth rates for the Fisher-KPP equation on Erd\H{o}s-R\'enyi random graphs were also studied in \cite{burioni12}.  Recall that sub-linear growth rates were found for the homogeneous tree; see Section~\ref{sec:GRnonlinear} where  sub-linear growth rates are explained by the existence of finite mass traveling fronts.  We suggest the same mechanism is at play for Erd\H{o}s-R\'enyi random graphs although we do not pursue making this mathematically rigorous.  We also note that the analysis on the tree suggests a growth rate that scales with $s_{lin}(\alpha)\log(k)$ for small $\alpha$, which for small $\alpha$ has leading order asymptotic expansion of $-\log(\alpha)\log(k)$.  This should be contrasted with the polynomial scaling of the exponential growth rate with respect to $\alpha$  suggested in \cite{burioni12}.  
\end{rmk}

\begin{figure}
\includegraphics[width=0.45\textwidth]{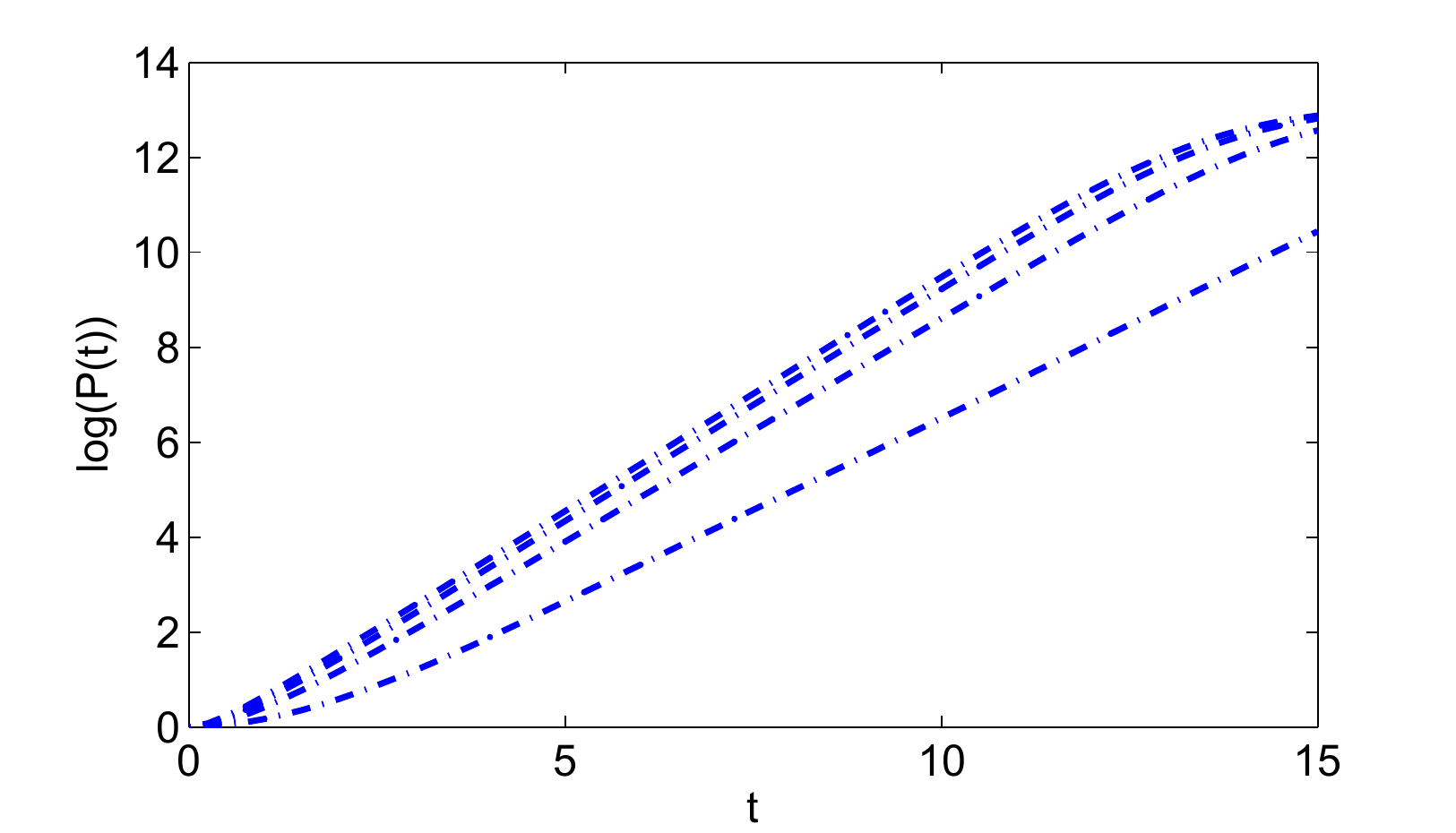}\hfill
\includegraphics[width=0.45\textwidth]{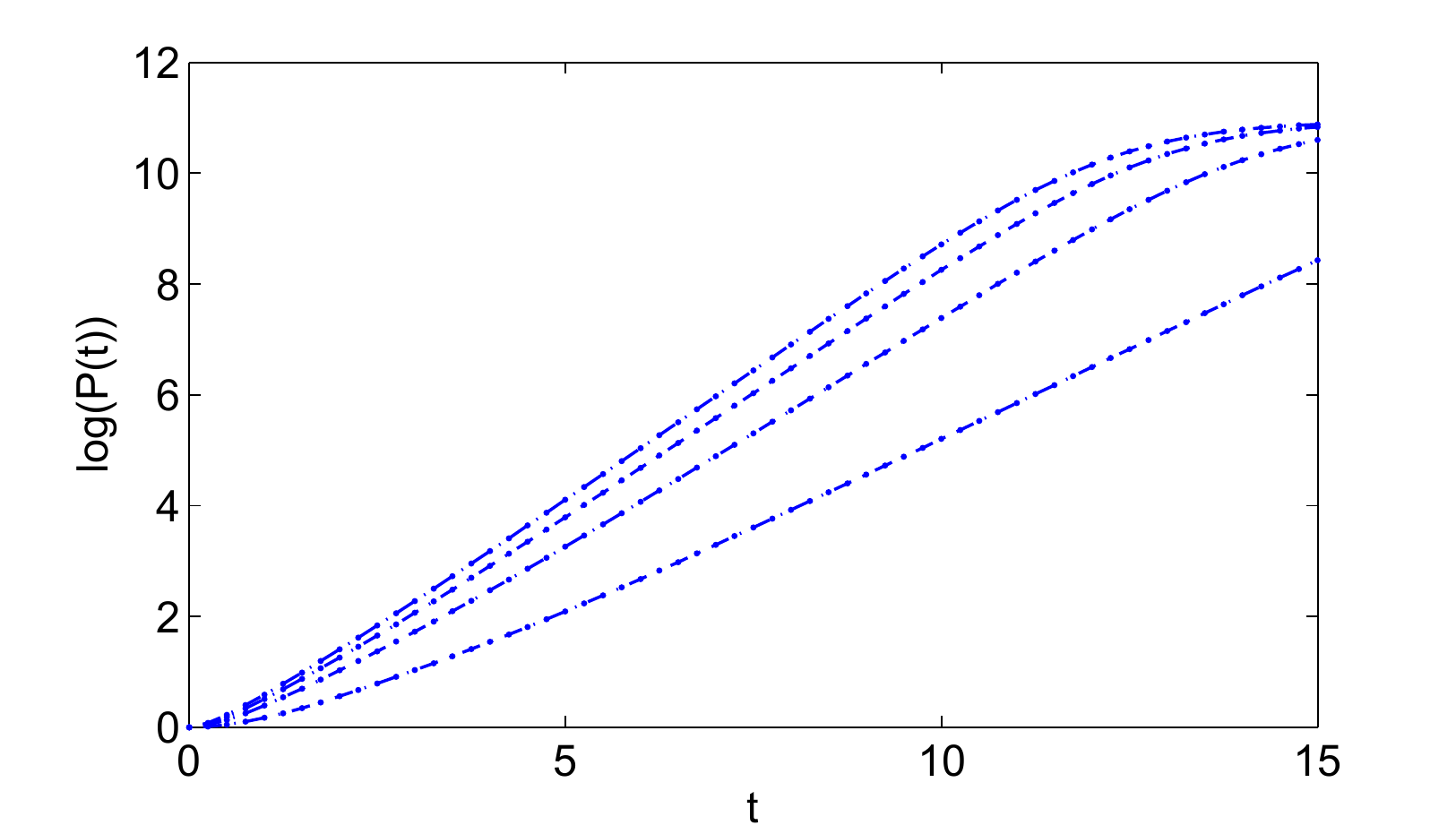}
\caption{Growth rate of the total population for Erd\H{o}s-R\'enyi graph.  On the left, $N=500,000$ and $\alpha=0.1,0.35,0.6,0.85$.  Larger values of $\alpha$ correspond to faster growth rates.  On the right is the case of $N=60,000$ with the same values of $\alpha$.   }\label{fig:loggrowth}
\end{figure}

\begin{figure}
\includegraphics[width=0.45\textwidth]{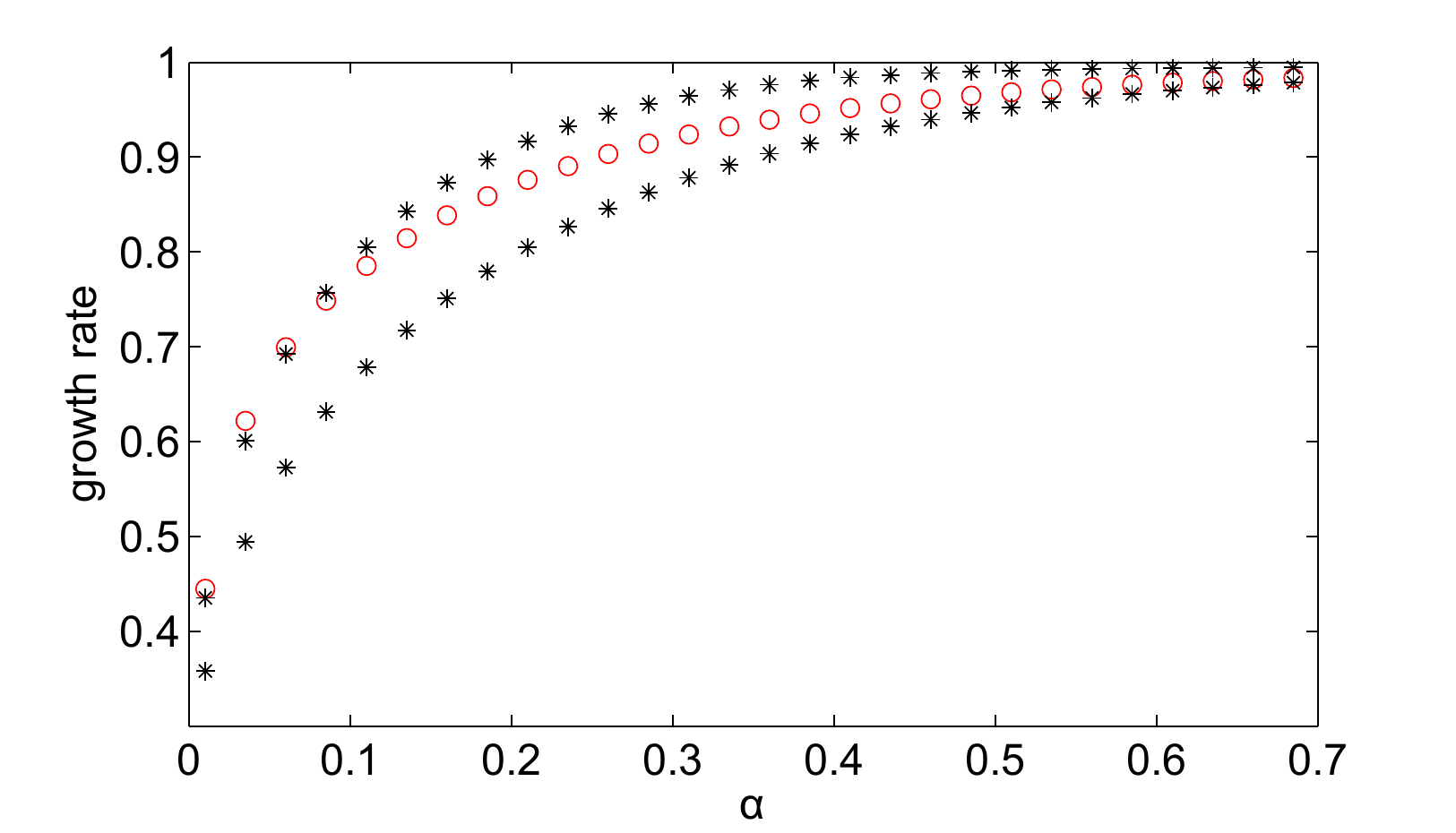}\hfill
\includegraphics[width=0.45\textwidth]{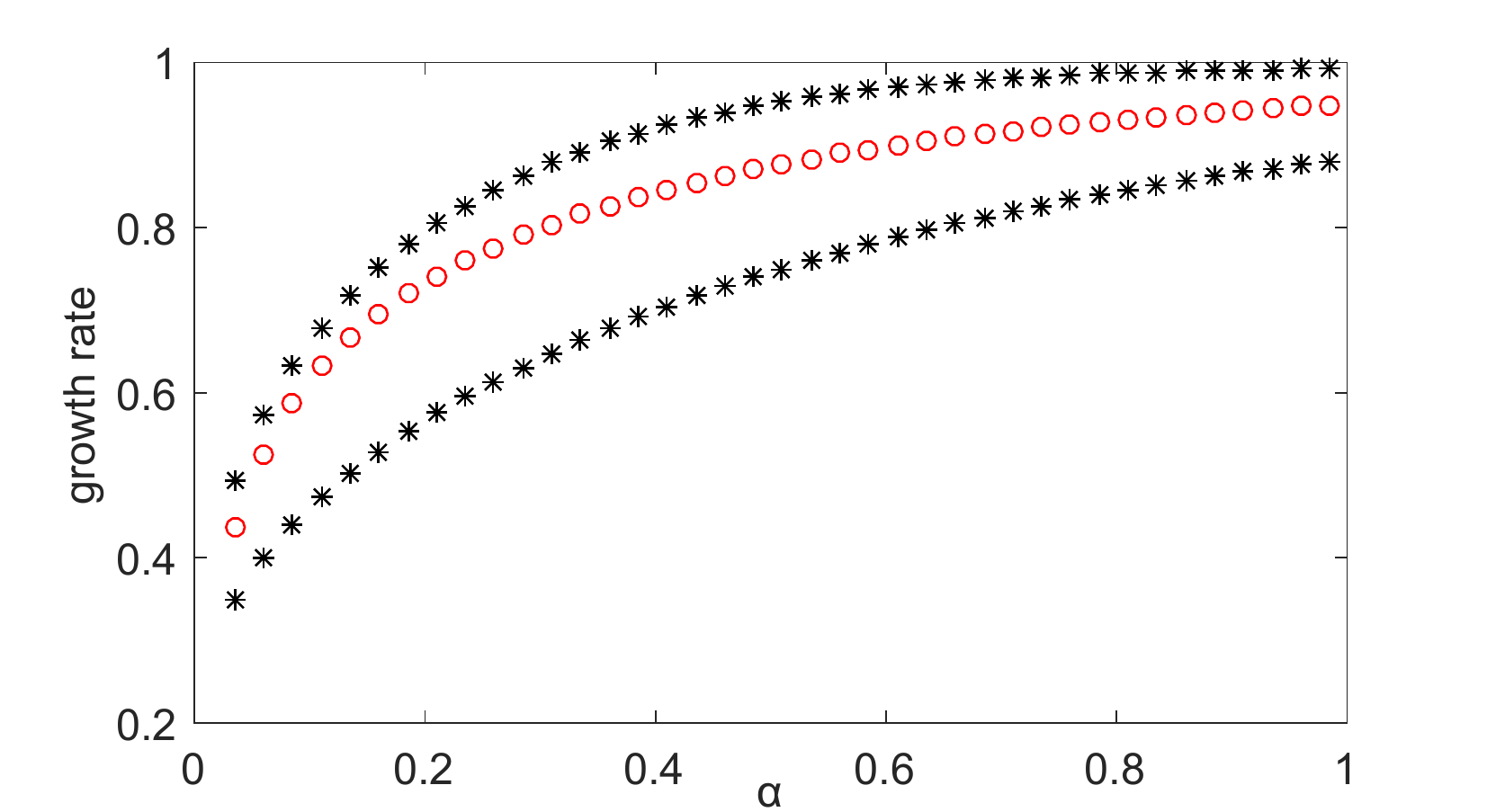}
\caption{Numerically calculated exponential growth rate for the Erd\H{o}s-R\'enyi graph.  On the left, $N=500,000$ and observed growth rates are plotted as circles.  The asterisks are the corresponding growth rates in the homogeneous tree with depth $13$.  The lower curve is degree $k=3$ while the larger curve is degree $k=4$.  On the right are the same computations, but for the Erd\H{o}s-R\'enyi graph with $N=60,000$ and for homogeneous trees with $k=2$ and $k=3$.   }\label{fig:ERgrowthrate}
\end{figure}

\section{Discussion and outlook for future work}\label{sec:conc}

In this paper, we studied the dynamics of the  Fisher-KPP equation defined on homogeneous trees and Erd\H{o}s-R\'{e}yni random graphs.  For the homogeneous tree we study traveling fronts and spreading speeds.  We find that these fronts are linearly determined and propagate at the linear spreading speed.  An interesting property is that these speeds are non-monotone with respect to the diffusion parameter and eventually become zero.  Therefore, for values of $\alpha>\alpha_2$ the solution converges pointwise exponentially fast to zero.  We find that the critical point of the linear spreading speed at $\alpha=\alpha_1$ occurs for the parameter value where the maximal linear growth rate of the total population in the network occurs at the linear spreading speed.  For larger values of $\alpha$ the front is slowed due to some fraction of the population escaping down the graph.  The increased mobility of the species allows the population to explore nodes with only small concentrations of the species where the local growth rate is maximal.

An interesting feature of the linear spreading speed is that it is independent of the degree of the nodes in the network in the asymptotic limit as $\alpha\to 0$.  This suggested that the spreading properties observed in the homogeneous tree should be present in random graphs and explored this for the case of random networks of Erd\H{o}s-R\'{e}yni type.  For $\alpha$ asymptotically small, we observed traveling fronts and characterized their speed.  It was also the case that the exponential growth rate of the population was less than the maximal growth rate of one and bounded this growth rate by the corresponding growth rate on the homogeneous tree.

Erd\H{o}s-R\'{e}yni graphs are known to share a close relationship to trees.  It would be interesting to extend the results here to different, more realistic, random networks such as small world \cite{watts98} or scale free networks \cite{barabasi99}.  The independence of the spreading speed on the degree for small $\alpha$ suggests that one should continue to observe traveling fronts on these more general networks that are well approximated by the lattice Fisher-KPP equation.  The situation is more complicated for larger values of $\alpha$ where the heterogeneity of the network can not be denied.  It would be interesting to characterize whether critical diffusion rates  analogous to $\alpha_1$ and $\alpha_2$ for these more general classes of graphs.  Example~\ref{ex:per} gives some indication that the effect of heterogeneity can be quite severe and so we expect that analysis of these systems would require a more general approach.

\section*{Acknowledgements} Portions of this research were initiated as part of a summer undergraduate research program at George Mason University as part of the EXTREEMS-QED program (NSF-DMS-1407087).  MH is grateful to Robert Truong for insightful discussions related to this project.  MH received partial support from the National Science Foundation through grant NSF-DMS-1516155.

\begin{appendix}
\section{Proof of Theorem~\ref{thm:spread}}\label{sec:proof}
In this section, we prove Theorem~\ref{thm:spread}.  The proof relies on the existence of a comparison principle for (\ref{eq:tree}).

\begin{prop} \label{prop:comp}
Suppose that there exists functions $\bar{u}_n(t)$ and $\underbar{u}_n(t)$ such that $0\leq \underbar{u}_n(0)\leq \bar{u}_n(0)\leq 1$.  Furthermore suppose that for all $t\geq 0$ and all  $n\geq 2$  we have 
\begin{eqnarray*} \bar{u}_n'&\geq &\alpha\left(\bar{u}_{n-1}-(k+1)\bar{u}_n+k\bar{u}_{n+1}\right)-f(\bar{u}_n)  \\
\underbar{u}_n'&\leq &\alpha\left(\underbar{u}_{n-1}-(k+1)\underbar{u}_n+k\underbar{u}_{n+1}\right)- f(\underbar{u}_n),
\end{eqnarray*}
while for $n=1$,
\begin{eqnarray*} \bar{u}_1'&\geq &\alpha\left(-k\bar{u}_1+k\bar{u}_{2}\right)- f(\bar{u}_1)  \\
\underbar{u}_1'&\leq & \alpha\left(-k\underbar{u}_1+k\underbar{u}_{2}\right)- f(\underbar{u}_1).
\end{eqnarray*}
Then for all $t\geq 0$ we have that 
\[ 0\leq \underbar{u}_n(t)\leq \bar{u}_n(t)\leq 1. \]

\end{prop}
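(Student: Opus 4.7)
The plan is to establish the central comparison $\underbar{u}_n(t) \leq \bar{u}_n(t)$ via the standard perturbation trick for cooperative lattice systems. The outer bounds are implicit since $f$ is only defined on $[0,1]$, and the argument below applies equally to the pairs $(\underbar{u},0)$ and $(1,\bar{u})$ in which one member is a constant solution of (\ref{eq:tree}).

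Set $w_n(t) := \bar{u}_n(t) - \underbar{u}_n(t)$. Subtracting the two differential inequalities and applying the mean value theorem gives $f(\bar{u}_n) - f(\underbar{u}_n) = c_n(t)\,w_n$ with $|c_n(t)| \leq L := \max_{[0,1]}|f'|$, so for $n \geq 2$
\[
  w_n'(t) \;\geq\; \alpha\bigl(w_{n-1} - (k+1)w_n + k w_{n+1}\bigr) + c_n(t)\,w_n,
\]
with an analogous inequality at $n=1$ having nonnegative off-diagonal coefficient $\alpha k$ in front of $w_2$. The structural feature to exploit is that the off-diagonals multiplying $w_{n\pm 1}$ are nonnegative: the linear part is quasi-monotone, which is precisely the hypothesis under which a maximum principle is available.

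To convert quasi-monotonicity into rigorous positivity, fix $M > L$ and $\epsilon > 0$ and set $v_n^\epsilon(t) := w_n(t) + \epsilon e^{Mt}$. Because the constant vector lies in the kernel of the tree Laplacian (its row sums satisfy $1 - (k+1) + k = 0$), the buffer preserves the shape of the inequality while introducing a strictly positive correction:
\[
  (v_n^\epsilon)'(t) \;\geq\; \alpha\bigl(v_{n-1}^\epsilon - (k+1)v_n^\epsilon + k v_{n+1}^\epsilon\bigr) + c_n(t)\,v_n^\epsilon + \epsilon e^{Mt}(M - L).
\]
Initially $v_n^\epsilon(0) \geq \epsilon > 0$. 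Arguing by contradiction, suppose some $v_n^\epsilon$ ever vanishes; let $t^*$ be the first such time and $n^*$ a node where this occurs. Then $v_{n^*}^\epsilon(t^*) = 0$, $v_m^\epsilon(t^*) \geq 0$ for all $m$, and $(v_{n^*}^\epsilon)'(t^*) \leq 0$. Evaluating the inequality at $(n^*, t^*)$ yields $(v_{n^*}^\epsilon)'(t^*) \geq \epsilon e^{Mt^*}(M - L) > 0$, the desired contradiction. Sending $\epsilon \to 0$ then gives $w_n(t) \geq 0$ uniformly in $n$ and $t$.

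The main technical obstacle, peculiar to the infinite lattice, is that $\inf_n v_n^\epsilon(t^*)$ need not be attained, so the candidate node $n^*$ may not exist without additional input. The a priori $l^\infty$ bound on $\bar{u}$ and $\underbar{u}$ (inherited from the fact that the problem is posed with values in $[0,1]$, where $f$ is defined) makes $w_n$ bounded, so $v_n^\epsilon(t^*) \to \epsilon e^{Mt^*} > 0$ as $n \to \infty$; the infimum over $n$ is therefore attained at some finite node and the contradiction argument closes. This is the single point where the infinite dimensionality of the state space is felt; once it is handled, the root equation at $n=1$ requires only cosmetic changes.
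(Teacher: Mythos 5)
Your overall strategy---setting $w_n=\bar{u}_n-\underbar{u}_n$, linearizing $f$ via the mean value theorem, adding the buffer $\epsilon e^{Mt}$ with $M>L$, and running a first-touching-point contradiction---is the standard comparison argument for quasimonotone lattice systems, and it is genuinely different from the paper, which does not prove Proposition~\ref{prop:comp} at all but refers to \cite{chen97} for a proof in a similar setting. Up to one step your argument is fine: the constant is indeed in the kernel of the lattice operator (also at the root), the off-diagonal coefficients are nonnegative, and the evaluation at a first touching node $(n^*,t^*)$ gives the desired contradiction.

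The genuine gap is precisely at the point you yourself flag as ``the single point where the infinite dimensionality of the state space is felt,'' and your repair does not work. From the $l^\infty$ bound on $w$ you conclude that $v^\epsilon_n(t^*)\to\epsilon e^{Mt^*}$ as $n\to\infty$; this is a non sequitur: boundedness of $w_n$ gives no decay of $w_n(t^*)$ in $n$, so the infimum over $n$ need not be attained, and for the same reason the ``first time $t^*$'' need not exist, since the would-be touching can escape to $n=\infty$. The standard fix is to let the buffer grow in $n$: replace $\epsilon e^{Mt}$ by $\phi_n(t)=\epsilon e^{Mt}e^{\beta n}$ for some $\beta>0$. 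Since the lattice operator applied to $e^{\beta n}$ produces $\alpha\left(e^{-\beta}-(k+1)+ke^{\beta}\right)e^{\beta n}$ (and $\alpha k(e^{\beta}-1)e^{\beta}$ at the root), choosing $M>L+\alpha\left(e^{-\beta}+ke^{\beta}\right)$ keeps a strictly positive remainder in the differential inequality for $v^\epsilon=w+\phi$; and because $w$ is bounded while $\phi_n\geq\epsilon e^{\beta n}$, only finitely many nodes can ever satisfy $v^\epsilon_n\leq 0$, so the first touching time and node exist and your contradiction closes, after which $\epsilon\to 0$ gives $w\geq 0$. (Alternatively, one can simply invoke a known comparison theorem for cooperative systems on $l^\infty$, which is in effect what the paper does by citation.) A smaller slip: the outer bounds cannot be obtained by applying the argument to the pairs $(\underbar{u},0)$ and $(1,\bar{u})$, since that orientation would require $\underbar{u}$ to be a supersolution and $\bar{u}$ a subsolution; the defensible reading is the other one you give, namely that the hypotheses only make sense while both functions take values in $[0,1]$ (the correctly oriented comparisons with the constant solutions yield $\underbar{u}\leq 1$ and $\bar{u}\geq 0$, not the stated outer inequalities).
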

We refer the reader to \cite{chen97} for a proof of the comparison principle in a setting similar to the one here.  

We now proceed to the proof of Theorem~\ref{thm:spread}.  We find it convenient to break the analysis according to whether $\alpha<\alpha_2$ or $\alpha>\alpha_2$.

\paragraph{Case I: $\alpha<\alpha_2(k)$}

\begin{lem}\label{lem:sup} Let $\gamma>0$ such that $s_{env}(\gamma)>0$.  Consider any $\alpha<\alpha_2$.  Then 
\[ \bar{u}_n(t)=\min\{1, Ce^{-\gamma(n-s_{env}(\gamma)t)}\},\]
is a super-solution for all $n\in\mathbb{N}$, any $C>1$ and all $t\geq 0$. 
\end{lem}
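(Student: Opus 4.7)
The strategy is to exploit the fact that the uncapped exponential $\phi_n(t) := Ce^{-\gamma(n - s_{env}(\gamma)t)}$ is, by the very definition of the envelope velocity, an exact solution of the equation linearized about zero: one has $\gamma s_{env}(\gamma) = 1 + \alpha\lambda(\gamma)$ with $\lambda(\gamma) := e^\gamma - k - 1 + ke^{-\gamma}$, which is precisely the statement that $\phi_n' = \alpha(\phi_{n-1} - (k+1)\phi_n + k\phi_{n+1}) + \phi_n$. Since $f$ is of KPP type, $f(u) \le u$ on $[0,1]$, so $\phi_n$ is already a super-solution of the nonlinear problem wherever it lies in $[0,1]$. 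The capping by $1$ is forced by the constraint $\bar{u}_n \le 1$ required for applying the comparison principle and introduces a non-smooth interface that must be handled separately.

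I would carry out a case analysis node-by-node. In the \emph{unclipped} region, where $\bar{u}_n = \phi_n < 1$, direct differentiation gives $\bar{u}_n' = \gamma s_{env}(\gamma)\bar{u}_n = (1 + \alpha\lambda(\gamma))\bar{u}_n$. The key monotone observation is that $\min\{1,\cdot\}$ can only lower neighboring values relative to the pure exponential, so $\bar{u}_{n-1} \le e^{\gamma}\bar{u}_n$ and $\bar{u}_{n+1} \le e^{-\gamma}\bar{u}_n$; these bounds yield $\alpha(\bar{u}_{n-1} - (k+1)\bar{u}_n + k\bar{u}_{n+1}) \le \alpha\lambda(\gamma)\bar{u}_n$, and combining with $f(\bar{u}_n) \le \bar{u}_n$ closes the super-solution inequality with equality. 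In the \emph{clipped} region, where $\bar{u}_n = 1$, one has $\bar{u}_n' = 0$, while the right-hand side is at most $\alpha(1 - (k+1) + k) + f(1) = 0$, using $\bar{u}_{n\pm 1} \le 1$ and $f(1) = 0$.

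At the transition time where $\phi_n$ crosses $1$, $\bar{u}_n$ exhibits a downward kink: the left derivative equals $\gamma s_{env}(\gamma) > 0$ and the right derivative is $0$, both of which satisfy the super-solution inequality by the preceding two cases, so interpreting $\bar{u}_n'$ as a right Dini derivative (or, equivalently, noting that $\bar{u}_n$ is Lipschitz and the inequality holds almost everywhere) suffices for the comparison principle of Proposition~\ref{prop:comp}. The root equation at $n = 1$ is handled by the same bound $\bar{u}_2 \le e^{-\gamma}\bar{u}_1$ together with the KPP condition. The principal obstacle, and the step deserving the most care, is verifying the one-sided exponential bounds $\bar{u}_{n\pm 1} \le e^{\pm\gamma}\bar{u}_n$ uniformly across the clipping interface --- without the spatial monotonicity of the $\min$ one could not control the discrete Laplacian by $\alpha\lambda(\gamma)\bar{u}_n$ alone, and it is precisely this monotonicity that makes clipping compatible with the envelope dispersion relation.
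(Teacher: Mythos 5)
Your proof is correct and follows essentially the same route as the paper: the uncapped exponential solves the linearization exactly by definition of $s_{env}$, the KPP bound $f(u)\le u$ together with $f(1)=0$ makes both the exponential and the constant $1$ super-solutions, and the cooperative (monotone) structure of the discrete operator makes their minimum a super-solution. You merely spell out the clipped/unclipped case analysis, the temporal kink, and the root inequality more explicitly than the paper does (the paper simply asserts the root value is identically $1$ when $C>1$, a step which literally requires $C\ge e^{\gamma}$, whereas your exponential bound at $n=1$ covers all $C>1$), but the underlying argument is the same.
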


\begin{proof} We need to show that for $n\geq 2$
\[ N(\bar{u}_n)=u_n'-\alpha\left(u_{n-1}-(k+1)u_n+ku_{n+1}\right)- u_n-\left(f(u_n)-u_n\right)\geq 0.\]
Since the constant $1$ is a solution of (\ref{eq:tree}), it is by definition a super-solution and $N(1)=0$.  The exponential term propagates at the envelope velocity of the mode $\gamma$.  We recall that 
\[ \alpha\left(e^\gamma-k-1+ke^{-\gamma}\right)-\gamma s_{env}(\gamma) +1=0.\]
As a result, we have that the linear terms in $N(\bar{u}_n)$ are zero and 
\[ N\left(Ce^{-\gamma(n-s_{env}(\gamma)t)}\right)=-\left(f(u_n)-u_n\right), \]
which is positive by the KPP assumption  $f(u)<f'(0)u$.  Thus, the exponential is a super-solution for all $n\geq 2$.  Since $C>1$, the super-solution is always one at the root and we have that $\bar{u}$ is a super-solution.

\end{proof}

We now turn our attention to the construction of sub-solutions.  Let $0<\mu<1$ and consider the linear equation
\[ \frac{d\phi_n}{dt}=\alpha\left(\phi_{n-1}-(k+1)\phi_n+k\phi_{n+1}\right)+\mu\phi_n.\]
Exponential solutions $e^{-\gamma(n-st)}$ are obtained for $\gamma$ and $s$ satisfying the following equation 
\be \alpha\left(e^{\gamma}-(k+1)+ke^{-\gamma}\right)-s\gamma+\mu=0.\label{eq:gsub}\ee
For each $\mu$, there exists $s_\mu$ such that (\ref{eq:gsub}) has no real solutions for $s<s_\mu$.  Let $\gamma_\mu\in\mathbb{C}$ be a complex solution.  Then let
\be \phi_\mu (y)=e^{\gamma_{\mu}y}+c.c. ,\label{eq:phi}\ee
where $c.c.$ denotes the complex conjugate and fix $a<b$ such that $\phi(y)>0$ for $y\in(a,b)$ and $\phi(a)=\phi(b)=0$. We use these functions to construct sub-solutions in the following Lemma.

\begin{lem}\label{lem:sub} Fix $\alpha<\alpha_2(k)$ and let $s<s_{lin}$.  Let $\mu<1$ such that $s<s_{\mu}$.  Then there exists  $T>0$,  a function $\phi_\mu(y)$ and an $\e^*$ such that 
\be \underbar{u}_n(t)=\left\{ \begin{array}{cc} \e \phi_\mu(n-st) & a\leq n-st\leq b \\
   0 & \text{else} \end{array}\right. \label{eq:usub}\ee
is a sub-solution for all $0<\e<\e^*$ and for all $t>T$.
\end{lem}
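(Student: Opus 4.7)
The plan is to build $\phi_\mu$ from a complex root of the sub-critical dispersion relation (\ref{eq:gsub}), restrict to one positive lobe $(a,b)$ of length exceeding $1$, and verify the sub-solution inequality node by node: linear cancellation in the interior, and favourable signs of $\phi_\mu$ in the adjacent lobes at the boundary of the support.

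First, since $s<s_\mu$ the equation (\ref{eq:gsub}) has only complex roots. Let $\gamma_\mu=\rho+i\beta$ with $\beta>0$ be one, and form the real solution $\phi_\mu(y)=e^{\gamma_\mu y}+e^{\bar{\gamma}_\mu y}=2e^{\rho y}\cos(\beta y)$. By reality of the coefficients in the linearised traveling-wave equation, $\phi_\mu$ satisfies
\[ -s\phi_\mu'(y)=\alpha\bigl(\phi_\mu(y-1)-(k+1)\phi_\mu(y)+k\phi_\mu(y+1)\bigr)+\mu\phi_\mu(y). \]
By taking $\mu$ slightly closer to $1$ (still with $s<s_\mu$) we can force the wavelength $\pi/\beta$ to exceed $1$, so that two consecutive zeros $a<b$ bound a positive lobe with $\phi_\mu\le 0$ on $[a-1,a]\cup[b,b+1]$. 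Choose $T>(1-a)/s$ so that for $t>T$ the support $\{n:a+st\le n\le b+st\}$ is contained in $\{n\ge 2\}$, removing the root from consideration.

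To verify the sub-solution inequality, fix $n\ge 2$ and $t>T$ and distinguish three cases. (i) If $n-st$ and both $(n\pm 1)-st$ lie in $[a,b]$, the displayed identity above (times $\e$) makes the linear terms cancel exactly, leaving the residual $\e\phi_\mu(n-st)\bigl[\mu-f(\e\phi_\mu(n-st))/(\e\phi_\mu(n-st))\bigr]$. Since $f'(0)=1>\mu$ and $\phi_\mu$ is bounded on $[a,b]$, one can pick $\e^\ast>0$ so this quantity is $\le 0$ for every $\e\in(0,\e^\ast)$. (ii) If $n-st\in(a,b)$ but a neighbour $(n\pm 1)-st$ lies outside $[a,b]$, the wavelength estimate places that neighbour in an adjacent lobe where $\phi_\mu\le 0$; replacing the negative value $\e\phi_\mu$ by $0$ only enlarges the right-hand side of the discrete Laplacian, so Case (i)'s estimate still yields the required inequality. (iii) If $n-st\notin(a,b)$ then $\underbar{u}_n\equiv 0$ and the inequality reduces to $0\le \alpha\underbar{u}_{n-1}+\alpha k\underbar{u}_{n+1}$, which is trivial.

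At the countable set of times where $n-st\in\{a,b\}$, the function $\underbar{u}_n$ is continuous but its derivative has a corner; the inequality holds for the right Dini derivative, which is the form in which the comparison principle Proposition~\ref{prop:comp} actually applies, or equivalently one may smooth $\phi_\mu$ near $\{a,b\}$ and pass to the limit. The main technical obstacle is Case (ii): to handle the truncation at the support boundary one needs $\phi_\mu\le 0$ just outside $[a,b]$ \emph{at least one lattice step away}, and this is the sole reason for shrinking $\mu$ slightly towards $1$; everything else reduces to the standard KPP algebra $f(u)/u\to 1$ as $u\to 0^+$.
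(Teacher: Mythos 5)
Your construction follows the paper's proof: take a complex root $\gamma_\mu$ of (\ref{eq:gsub}), form $\phi_\mu(y)=e^{\gamma_\mu y}+c.c.$, truncate to a positive lobe $[a,b]$, use the dispersion identity to cancel the linear terms, and absorb the residual $\mu u-f(u)$ by taking $\e$ small; your cases (i) and (iii) and the choice of $\e^*$ reproduce the paper's computation, and your case (ii) addresses the truncation at the edge of the support, a point the paper passes over in silence, so the extra care is in principle welcome.

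However, the mechanism you invoke to settle case (ii) is backwards, and as stated that step fails. For fixed $s$, let $\mu_0(s)$ denote the value with $s=s_{\mu_0}$. The complex roots of (\ref{eq:gsub}) bifurcate from a \emph{real double root} at $\mu=\mu_0(s)$, and for $\mu>\mu_0$ the pair nearest the real axis has imaginary part $\beta\approx\sqrt{2(\mu-\mu_0)/\partial_\gamma^2 d}$ with $\partial^2_\gamma d=\alpha(e^{\gamma}+ke^{-\gamma})>0$; thus $\beta$ \emph{grows} as $\mu$ moves up toward $1$, and the lobes of $\phi_\mu$ get shorter, not longer. (The continuum caricature $\gamma^2-s\gamma+\mu=0$ shows this explicitly: $\beta=\tfrac{1}{2}\sqrt{4\mu-s^2}$, increasing in $\mu$.) To guarantee $\phi_\mu\le 0$ on $[a-1,a]\cup[b,b+1]$ you must take $\mu$ just \emph{above} $\mu_0(s)$, i.e. with $s_\mu$ close to $s$, so that $\beta$ is small and the lobe width $\pi/\beta$ exceeds $1$; since $\mu_0(s)<1$ whenever $s<s_{lin}$, this choice is available and keeps $\mu<1$ as the KPP estimate requires. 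Note also that the lemma as literally stated hands you an arbitrary admissible $\mu$, for which the lobe-width condition may genuinely fail for your construction; but, exactly as in the paper's own proof, $\mu$ is in practice at your disposal (only some $\mu$ is needed to bound $s_{sel}$ from below), so with the direction of the $\mu$-tuning reversed your argument goes through.
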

\begin{proof}
For $s<s_{lin}$, select $\mu<1$ so that $s<s_\mu$ and consider the function $\phi_\mu$ in (\ref{eq:phi}) whose support is the interval $[a+st,b+st]$.  Let $T$ be sufficiently large so that $sT+a>1$.  Now consider
\[ N(u_n)=u_n'-\alpha\left(u_{n-1}-(k+1)u_n+ku_{n+1}\right)- u_n-\left(f(u_n)-u_n\right).\]
We find
\[ N(\underbar{u}_n)=-\e \phi(n-st)(1-\mu)+\left(\e \phi(n-st)-f(\e\phi(n-st))\right).\]
For $u$ small, there exists a $C>0$ such that $-Cu^2<f(u)-u$.  Let $\e$ be sufficiently small so that this bound holds for (\ref{eq:usub}).  Then
\[ N(\underbar{u}_n)<\e \phi(n-st)\left(-(1-\mu)+C\e \phi(n-st)\right),\]
and if we restrict
\[ \e<\frac{1-\mu}{C\max_{y\in[a,b]} \phi(y)},\]
then $N(\underbar{u}_n)<0$ and $\underbar{u}_n$ is a sub-solution.  
\end{proof}

\paragraph{Case II: $\alpha>\alpha_2(k)$} 

For $\alpha>\alpha_2$, we only concern ourselves with the establishment of super-solutions.  

\begin{lem}\label{lem:sup2} Consider any $\alpha>\alpha_2$.  Let $\gamma>0$ such that $s_{env}(\gamma)<0$.   Then 
\[ \bar{u}_n(t)= e^{-\gamma(n-s_{env}(\gamma)t)},\]
is a super-solution for all $n\in\mathbb{N}$ and all $t\geq 0$. 
\end{lem}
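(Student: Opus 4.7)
The plan is to carry out a direct substitution into the super-solution inequality, mirroring the template already used in Lemma~\ref{lem:sup}. The exponential ansatz is tuned so that it solves the linearization exactly when the speed is the envelope velocity, so the only nontrivial content comes from the nonlinear correction, which is handled by the KPP condition.

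First I would observe that, because $s_{env}(\gamma)<0$, the quantity $n-s_{env}(\gamma)t\geq n\geq 1$ for all $t\geq 0$ and $n\in\mathbb{N}$, hence $0<\bar{u}_n(t)\leq e^{-\gamma}<1$. This pointwise bound is what makes the cap at $1$ appearing in Lemma~\ref{lem:sup} unnecessary here, and it is essential because the KPP inequality $f(u)\leq u$ is only guaranteed on $[0,1]$.

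Next, for $n\geq 2$, I would plug $\bar{u}_n=e^{-\gamma(n-s_{env}(\gamma)t)}$ into the operator
\[ N(\bar{u}_n)=\bar{u}_n'-\alpha\bigl(\bar{u}_{n-1}-(k+1)\bar{u}_n+k\bar{u}_{n+1}\bigr)-f(\bar{u}_n), \]
use $\bar{u}_{n\pm 1}=e^{\mp\gamma}\bar{u}_n$ and $\bar{u}_n'=\gamma s_{env}(\gamma)\bar{u}_n$, and invoke the defining identity $\gamma s_{env}(\gamma)=\alpha\bigl(e^{\gamma}-k-1+ke^{-\gamma}\bigr)+1$ of the envelope velocity. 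All linear terms cancel except for $+\bar{u}_n$, leaving $N(\bar{u}_n)=\bar{u}_n-f(\bar{u}_n)\geq 0$ by the KPP assumption and the bound of the previous paragraph.

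For $n=1$, an analogous computation at the root boundary: substitute the ansatz into $\bar{u}_1'-\alpha(-k\bar{u}_1+k\bar{u}_2)-f(\bar{u}_1)$ (or the $(k+1)$-variant from equation~(\ref{eq:tree})), again use the envelope-velocity identity, and check that the residual takes the form $\alpha C(\gamma,k)\bar{u}_1+(\bar{u}_1-f(\bar{u}_1))$ with $C(\gamma,k)>0$ for $\gamma>0$; both terms are nonnegative. There is no real obstacle: the calculation is a direct verification, and the only subtlety to flag is that this lemma produces a super-solution without automatically dominating the initial datum $u_1(0)=1$, so when applied inside the proof of Theorem~\ref{thm:spread} it must be combined (for example via a minimum against the trivial super-solution $1$, or by multiplying by a constant $C\geq e^{\gamma}$) to conclude uniform decay to zero for $\alpha>\alpha_2$.
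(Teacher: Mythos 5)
Your proposal is correct and follows essentially the same route as the paper: plug the exponential into $N$, use the envelope-velocity identity $\gamma s_{env}(\gamma)=\alpha\left(e^{\gamma}-k-1+ke^{-\gamma}\right)+1$ so that only $\bar{u}_n-f(\bar{u}_n)\geq 0$ survives for $n\geq 2$, and verify the root separately, where the residual constant you call $C(\gamma,k)$ is explicitly $e^{\gamma}-e^{-\gamma}>0$ for the $(k+1)$-form of the root equation. Your added observations --- that $s_{env}(\gamma)<0$ keeps $\bar{u}_n\leq e^{-\gamma}<1$ so the KPP bound applies without capping at $1$, and that the super-solution must be rescaled or combined with the constant $1$ to dominate the initial datum when invoked in Theorem~\ref{thm:spread} --- are correct refinements of points the paper leaves implicit.
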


\begin{proof}
We follow the proof of Lemma~\ref{lem:sup2} and compute $N(u_n)$.  The calculation is exactly the same, aside from the root.  There we calculate
\begin{eqnarray*} N(u_1)&=& \left(s_{env}(\gamma)\gamma - \alpha(k+1)(-1+e^{-\gamma})-1\right)e^{-\gamma(1-s_{env}(\gamma)t)} -\left(f(u_n)-u_n\right) \\
&=& \alpha e^{-\gamma(1-s_{env}(\gamma)t)} \left(e^\gamma-e^{-\gamma}\right)-\left(f(u_n)-u_n\right).
\end{eqnarray*}
Since both terms are positive, we have that $N(u_1)$ is positive as well and the proof is completed.

\end{proof}

We now prove Theorem~\ref{thm:spread}.  The claim of pointwise convergence for $\alpha>\alpha_2(k)$ follows from Lemma~\ref{lem:sup2} and the fact that $s_{env}(\gamma)<0$.  Consider then $\alpha<\alpha_2$.  Here, we have that the linear spreading speed is positive and obtained for some value of $s_{lin}$ with selected decay rate $\gamma_{lin}$.  Then since $C>1$ in Lemma~\ref{lem:sup} we have that $\bar{u}_n(0)\geq u_n(0)$.  Due to the comparison principle, this relationship holds for all time and we obtain an upper bound on the location of the invasion point: $\kappa(t)\leq s_{lin}(\alpha)t $.  To obtain a lower bound, we consider the sub-solutions constructed in Lemma~\ref{lem:sub}.   Select $s<s_{lin}(\alpha)$ and consider $\underbar{u}_n$ from Lemma~\ref{lem:sub}.  Let $t_1>T$.  By the maximum principle, we have that $u_n(t_1)>0$ for all $n\in\mathbb{N}$. Since $\underbar{u}_n(t_1)$ is compactly supported, we can select $\epsilon>0$ sufficiently small such that $\underbar{u}_n(t_1)\leq u_n(t_1)$.  Therefore, we have that $\kappa(t)\geq st$.  This holds for any $s<s_{lin}(\alpha)$ and we find $s_{lin}(\alpha)\leq s_{sel}\leq s_{lin}(\alpha)$ and Theorem~\ref{thm:spread} is established.  

\end{appendix}

\bibliographystyle{abbrv}
\bibliography{Treebib}

\def\cprime{$'$}
\begin{thebibliography}{10}

\bibitem{barabasi99}
A.-L. Barab{\'a}si and R.~Albert.
\newblock Emergence of scaling in random networks.
\newblock {\em science}, 286(5439):509--512, 1999.

\bibitem{batagelj}
V.~Batagelj and U.~Brandes.
\newblock Efficient generation of large random networks.
\newblock {\em Phys. Rev. E}, 71:036113, Mar 2005.

\bibitem{bers84}
A.~{Bers}.
\newblock {Space-time evolution of plasma instabilities-absolute and
  convective}.
\newblock In {A.~A.~Galeev \& R.~N.~Sudan}, editor, {\em Basic Plasma Physics:
  Selected Chapters, Handbook of Plasma Physics, Volume 1}, pages 451--517,
  1984.

\bibitem{bollobas01}
B.~Bollob\'as.
\newblock {\em Random graphs}, volume~73 of {\em Cambridge Studies in Advanced
  Mathematics}.
\newblock Cambridge University Press, Cambridge, second edition, 2001.

\bibitem{bramson83}
M.~Bramson.
\newblock Convergence of solutions of the {K}olmogorov equation to travelling
  waves.
\newblock {\em Mem. Amer. Math. Soc.}, 44(285):iv+190, 1983.

\bibitem{brevdo96}
L.~Brevdo and T.~J. Bridges.
\newblock Absolute and convective instabilities of spatially periodic flows.
\newblock {\em Philos. Trans. Roy. Soc. London Ser. A}, 354(1710):1027--1064,
  1996.

\bibitem{briggs}
R.~J. Briggs.
\newblock {\em Electron-Stream Interaction with Plasmas}.
\newblock MIT Press, Cambridge, 1964.

\bibitem{brockmann13}
D.~Brockmann and D.~Helbing.
\newblock The hidden geometry of complex, network-driven contagion phenomena.
\newblock {\em Science}, 342(6164):1337--1342, 2013.

\bibitem{burioni12}
R.~Burioni, S.~Chibbaro, D.~Vergni, and A.~Vulpiani.
\newblock Reaction spreading on graphs.
\newblock {\em Phys. Rev. E}, 86:055101, Nov 2012.

\bibitem{chen97}
X.~Chen.
\newblock Existence, uniqueness, and asymptotic stability of traveling waves in
  nonlocal evolution equations.
\newblock {\em Adv. Differential Equations}, 2(1):125--160, 1997.

\bibitem{chinta15}
G.~Chinta, J.~Jorgenson, and A.~Karlsson.
\newblock Heat kernels on regular graphs and generalized {I}hara zeta function
  formulas.
\newblock {\em Monatsh. Math.}, 178(2):171--190, 2015.

\bibitem{chung99}
F.~Chung and S.-T. Yau.
\newblock Coverings, heat kernels and spanning trees.
\newblock {\em Electron. J. Combin.}, 6:Research Paper 12, 21 pp.\
  (electronic), 1999.

\bibitem{colizza07}
V.~Colizza, R.~Pastor-Satorras, and A.~Vespignani.
\newblock Reaction--diffusion processes and metapopulation models in
  heterogeneous networks.
\newblock {\em Nat Phys}, 3:276--282, Jan. 2007.

\bibitem{durrett07}
R.~Durrett.
\newblock {\em Random graph dynamics}, volume~20 of {\em Cambridge Series in
  Statistical and Probabilistic Mathematics}.
\newblock Cambridge University Press, Cambridge, 2007.

\bibitem{erdos59}
P.~Erd{\H{o}}s and A.~R{\'e}nyi.
\newblock On random graphs. {I}.
\newblock {\em Publ. Math. Debrecen}, 6:290--297, 1959.

\bibitem{fisher37}
R.~A. Fisher.
\newblock The wave of advance of advantageous genes.
\newblock {\em Annals of Human Genetics}, 7(4):355--369, 1937.

\bibitem{hindes13}
J.~Hindes, S.~Singh, C.~R. Myers, and D.~J. Schneider.
\newblock Epidemic fronts in complex networks with metapopulation structure.
\newblock {\em Phys. Rev. E}, 88:012809, Jul 2013.

\bibitem{holzer16}
M.~Holzer.
\newblock A proof of anomalous invasion speeds in a system of coupled
  {F}isher-{KPP} equations.
\newblock {\em Discrete Contin. Dyn. Syst.}, 36(4):2069--2084, 2016.

\bibitem{holzer14}
M.~Holzer and A.~Scheel.
\newblock Criteria for pointwise growth and their role in invasion processes.
\newblock {\em J. Nonlinear Sci.}, 24(4):661--709, 2014.

\bibitem{kolmogorov37}
A.~Kolmogorov, I.~Petrovskii, and N.~Piscounov.
\newblock Etude de l'equation de la diffusion avec croissance de la quantite'
  de matiere et son application a un probleme biologique.
\newblock {\em Moscow Univ. Math. Bull.}, 1:1--25, 1937.

\bibitem{kouvaris12}
N.~E. Kouvaris, H.~Kori, and A.~S. Mikhailov.
\newblock Traveling and pinned fronts in bistable reaction-diffusion systems on
  networks.
\newblock {\em PLoS ONE}, 7(9):1--12, 09 2012.

\bibitem{matano15}
H.~Matano, F.~Punzo, and A.~Tesei.
\newblock Front propagation for nonlinear diffusion equations on the hyperbolic
  space.
\newblock {\em J. Eur. Math. Soc. (JEMS)}, 17(5):1199--1227, 2015.

\bibitem{mohar89}
B.~Mohar and W.~Woess.
\newblock A survey on spectra of infinite graphs.
\newblock {\em Bull. London Math. Soc.}, 21(3):209--234, 1989.

\bibitem{newman03}
M.~E.~J. Newman.
\newblock The structure and function of complex networks.
\newblock {\em SIAM Review}, 45(2):167--256, 2003.

\bibitem{porter16}
M.~A. Porter and J.~P. Gleeson.
\newblock {\em Dynamical systems on networks}, volume~4 of {\em Frontiers in
  Applied Dynamical Systems: Reviews and Tutorials}.
\newblock Springer, Cham, 2016.
\newblock A tutorial.

\bibitem{sandstede00}
B.~Sandstede and A.~Scheel.
\newblock Absolute and convective instabilities of waves on unbounded and large
  bounded domains.
\newblock {\em Phys. D}, 145(3-4):233--277, 2000.

\bibitem{strogatz01}
S.~H. Strogatz.
\newblock {Exploring complex networks}.
\newblock {\em Nature}, 410(6825):268--276, Mar. 2001.

\bibitem{vansaarloos03}
W.~van Saarloos.
\newblock Front propagation into unstable states.
\newblock {\em Physics Reports}, 386(2-6):29 -- 222, 2003.

\bibitem{vespignani12}
A.~Vespignani.
\newblock Modelling dynamical processes in complex socio-technical systems.
\newblock {\em Nature Physics}, 8:32--39, 2012.

\bibitem{watts98}
D.~J. Watts and S.~H. Strogatz.
\newblock Collective dynamics of ‘small-world’networks.
\newblock {\em nature}, 393(6684):440--442, 1998.

\bibitem{weinberger82}
H.~F. Weinberger.
\newblock Long-time behavior of a class of biological models.
\newblock {\em SIAM Journal on Mathematical Analysis}, 13(3):353--396, 1982.

\bibitem{zinner93}
B.~Zinner, G.~Harris, and W.~Hudson.
\newblock Traveling wavefronts for the discrete {F}isher's equation.
\newblock {\em J. Differential Equations}, 105(1):46--62, 1993.

\end{thebibliography}

\end{document}